\documentclass[11pt]{article}
\usepackage[utf8]{inputenc}
\usepackage{amsmath}
\usepackage{amssymb}
\usepackage{amsthm}
\usepackage{url}
\usepackage{hyperref}
\usepackage{xcolor}
\usepackage{dsfont}
\usepackage{float}
\usepackage{mathrsfs}
\usepackage{yfonts}
\newtheorem{definition}{Definition}

\newtheorem{remark}[definition]{Remark}
\newtheorem{theorem}[definition]{Theorem}
\newtheorem{lemma}[definition]{Lemma}
\newtheorem{corollary}[definition]{Corollary}

\usepackage{graphicx}
\usepackage{fullpage}
 \usepackage{caption}
 \usepackage{subcaption}
 \usepackage{verbatim}
\usepackage[title]{appendix}
\usepackage{upgreek}


\usepackage{tikz}
\usetikzlibrary{shapes.geometric,plotmarks,backgrounds,fit,calc,circuits.ee.IEC}
\usetikzlibrary{patterns,tikzmark}
\usetikzlibrary{decorations.pathreplacing}

\usetikzlibrary{matrix}
\usepackage{hf-tikz}

\usepackage{pgfplotstable}

\usepackage{epsfig}
\usetikzlibrary{shapes.symbols,patterns} 
\usepackage{pgfplots}
\usepgfplotslibrary{ternary}
\pgfplotsset{compat=newest}

\usepackage{quantikz}

\usepackage{ytableau}

\usepackage{hyperref}


\usepackage{makecell}
\usepackage{cancel}

\newtheorem{proposition}{Proposition}

\allowdisplaybreaks[4]


\usepackage{braket}

\newcommand{\mr}{\mathrm}

\newcommand{\id}{\mathds{1}}

\DeclareMathOperator{\tr}{Tr}

\newcommand{\sym}{\mathrm{Sym}}

\newcommand{\tp}{{\tt{t}}} 
\newcommand{\ketbra}[2]{|#1\rangle\!\langle #2|}
\let\inner\relax
\NewDocumentCommand\inner{mg}{%
	\ensuremath{\left\langle #1 \middle\vert \IfNoValueTF{#2}{#1}{#2}\right\rangle}%
}



\usepackage{listings}
\usepackage{matlab-prettifier}


\pgfkeys{tikz/mymatrixenv/.style={decoration={brace},every left delimiter/.style={xshift=8pt},every right delimiter/.style={xshift=-8pt}}}
\pgfkeys{tikz/mymatrix/.style={matrix of math nodes,nodes in empty cells,left delimiter={[},right delimiter={]},inner sep=1pt,outer sep=1.5pt,column sep=8pt,row sep=8pt,nodes={minimum width=20pt,minimum height=10pt,anchor=center,inner sep=0pt,outer sep=0pt}}}
\pgfkeys{tikz/mymatrixbrace/.style={decorate,thick}}

\newcommand*\mymatrixbraceright[4][m]{
    \draw[mymatrixbrace] (#1.west|-#1-#3-1.south west) -- node[left=2pt] {#4} (#1.west|-#1-#2-1.north west);
}

\newcommand*\mymatrixbracetop[4][m]{
    \draw[mymatrixbrace] (#1.north-|#1-1-#2.north west) -- node[above=2pt] {#4} (#1.north-|#1-1-#3.north east);
}

\tikzset{style green/.style={
    set fill color=green!50!lime!60,draw opacity=0.4,
    set border color=green!50!lime!60,fill opacity=0.1,
  },
  style cyan/.style={
    set fill color=cyan!90!blue!60, draw opacity=0.4,
    set border color=blue!70!cyan!30,fill opacity=0.1,
  },
  style orange/.style={
    set fill color=orange!90, draw opacity=0.8,
    set border color=orange!90, fill opacity=0.3,
  },
  style brown/.style={
    set fill color=brown!70!orange!40, draw opacity=0.4,
    set border color=brown, fill opacity=0.3,
  },
  style purple/.style={
    set fill color=violet!90!pink!20, draw opacity=0.5,
    set border color=violet, fill opacity=0.3,    
  },
  kwad/.style={
    above left offset={-0.1,0.23},
    below right offset={0.10,-0.36},
    #1
  },
  pion/.style={
    above left offset={-0.07,0.2},
    below right offset={0.07,-0.32},
    #1
  },
  poz/.style={
    above left offset={-0.03,0.18},
    below right offset={0.03,-0.3},
    #1
  },set fill color/.code={\pgfkeysalso{fill=#1}},
  set border color/.style={draw=#1}
}


\usepackage[normalem]{ulem}





\title{Symmetry reduction for testing $k$-block-positivity via extendibility}
\author{Qian Chen$^1$, Beno\^it Collins$^2$, Omar Fawzi$^1$ \\[2mm]
    {\small $^1$Universit\'e de Lyon, Inria, ENS de Lyon, UCBL, LIP, France} \\
    {\small $^2$ Kyoto University, Mathematics department
    }\\
  {\small chenqian.phys@gmail.com,
  collins@math.kyoto-u.ac.jp,
  omar.fawzi@ens-lyon.fr }}
\date{}

\begin{document}

\maketitle

\begin{abstract}
We study the problem of testing $k$-block-positivity via symmetric $N$-extendibility by taking the tensor product with a $k$-dimensional maximally entangled state. We exploit the unitary symmetry of the maximally entangled state to reduce the size of the corresponding semidefinite programs (SDP). For example, for $k=2$, the SDP is reduced from one block of size $2^{N+1} d^{N+1}$ to $\lfloor \frac{N+1}{2} \rfloor$
blocks of size $\approx O( (N-1)^{-1} 2^{N+1} d^{N+1} )$.
\end{abstract}

\section{Introduction}
A bipartite Hermitian operator $X \in \mathrm{Herm}_{d^2}(\mathbb{C})$ is said to be $k$-block-positive if $\tr(X \rho) \geq 0$ for any $\rho \in \mathrm{Sep}_{k}$, where $\mathrm{Sep}_k$ is the convex hull of states having Schmidt number at most $k$. A $k$-block-positive operator can act as a witness of having Schmidt rank larger than $k$~\cite{Stormer2013PLMOA}. It also closely connects to the notions of bound entanglement and distillability of entanglement, e.g., the 2-copy distillability conjecture \cite{PRXQuantum.3.010101} that asks whether $(\id+\alpha d \Pi_{d})^{\otimes 2}$ is nonnegative for all Schmidt rank-2 states. A bipartite Hermitian operator is said to be $k$-block-positive if its Hilbert-Schmidt product with any Schmidt number $k$ state is nonnegative. The set of $k$ block positive operators is the dual set of Schmidt number $k$ (or $k$-separable) states \cite{PhysRevA.61.040301,PhysRevA.63.050301,johnston2010NQitI,johnston2010NQitII}, and equivalent to $k$-positivity through Choi-Jamio{\l}kowski isomorphism \cite{skowronek2009cones}. 

\medskip

In this paper, we study testing $k$-block-positivity based on semidefinite programming (SDP).
To be more explicit, let $X$ be any bipartite Hermitian operators $\mathrm{Herm}_{d^2}(\mathbb{C})$ to be tested where $d$ is the local dimension, and we would like to find a lower bound on $h_{\mathrm{Sep}_k}(X) = \min_{\rho \in \mathrm{Sep}_k} \tr(X \rho)$. In order to achieve this, we consider extending $\mathbb{C}^{d} \to \mathbb{C}^{k} \otimes \mathbb{C}^{d}$ and then apply the trick of tensoring a $k$-dimensional maximally entangled projection to reduce the problem to a $1$-block-positivity problem \cite{johnston2012norms}.
This introduces an auxiliary system with dimension $k$, and converts $k$-block-positivity testing into block-positivity testing. We then use the standard SDP relaxation based on symmetric extensions of order $N$~\cite{caves2002DeFinetti,doherty2004DPShierarchy,christandl2007DeFinetti}. This gives rise to a semidefinite program whose optimal value gives a lower bound $\mathsf{SDP}_{k,N}(X)$ on $h_{\mathrm{Sep}_k}(X)$ (see Definition~\ref{definition:SDP-k-block-positivity-N-BSE} and Section~\ref{sec:SDP:ext-hierarchy} for details).
This SDP relaxation has multiple symmetries, in particular the unitary group in dimension $k$
acts as working as $U \otimes \id$, and the symmetric group of order $N$ whose implementation is defined as $\Delta_{B} : S_N \to \mathrm{U}((\mathbb{C}^{k} \otimes \mathbb{C}^{d})^{\otimes N})$ with $\Delta_{B}(\pi) \mapsto U_{\pi} \otimes U_{\pi}$.

The present paper studies the SDP reductions that arise from the symmetries, and estimates the computational resource that SDP may require. In general, the number of real variables required to parameterize an SDP underlying the set of $D \times D$ Hermitian positive definite matrices corresponds to the dimension of the space of $D \times D$ Hermitian matrices, which is $D^2$.
Without symmetry reduction, the size of a positive semidefinite matrix in $\mathsf{SDP}_{k,N}$ is $D=k^{N+1} \times d^{N+1}$.

After symmetry reduction, the positive semidefinite matrix $\rho$ is decomposed into blocks following the Schur-Weyl duality. Each block is associated with a Young diagram $\lambda$. For a Young diagram $\lambda$, we call the corresponding block the $\lambda$-block (see Eq.\eqref{eq:DiagramBlocks} and below explanation for details).
The action of permutations on these diagram-blocks are closed, hence the decomposition offers $\mathsf{SDP}_{k,N}$ a series of independent computations based on SDPs associated with irreducible representations of symmetric group.
That means, the permutational symmetry can be used independently in each block for the purpose of further reduction the diagram-blocks. In this paper, we focus on symmetry reduction stemming from unitary symmetries in the auxiliary spaces. The reduction stemming from permutational symmetry will be analyzed in subsequent work.

\medskip

The main result of this paper is presented below.
\begin{theorem}[$k$-block-positivity SDP symmetry reduction]
Denote $X_{(N)}=X \otimes \id_{d}^{\otimes (N-1)}$. We can write $\mathsf{SDP}_{k,N}(X)=\mathsf{SDP}_{k,N}^{\sym}(X)$ where $\mathsf{SDP}_{k,N}^{\sym}(X)$ is defined as follows:
\begin{align}
\mathsf{SDP}_{k,N}^{\sym}(X)
&:=
\min_{\{\rho_{\lambda} \in \mathrm{Pos}(\mathbb{C}^{d_{\lambda}} \otimes (\mathbb{C}^{d})^{\otimes (N+1)}), \lambda \vdash_{k} (N+k-1)\}} 
\tr [ ( \mathbb{P}_{\mathbb{Y}_{\lambda / (1^{k})}} \otimes X_{(N)} ) \rho_{\lambda} ], \\
\text{subject to } \:
&
\Delta_{\lambda}(\tau) \rho_{\lambda} = \rho_{\lambda}
, \
\forall \tau \in \mr{Cox}_{N}
, \ \text{and }
\tr\rho_{\lambda}=1.
\nonumber
\end{align}
Here,
\begin{itemize}
\item
$\lambda \vdash_k (N+k-1)$ denotes a Young diagram with $N+k-1$ boxes and exactly $k$ rows.
Denote $\mathrm{SYT}_{\lambda / (1^{k})}$ the set of standard Young tableaux based on $\lambda / (1^{k})$ associating projector $\mathbb{P}_{\mathbb{Y}_{\lambda / (1^{k})}}$ which is obtained by embedding $\mathbb{P}_{\mathbb{Y}_{\lambda / (1^{k})}}=\id_{\mathbb{Y}_{\lambda / (1^{k})}} \oplus 0_{\mathbb{Y}_{\lambda / (2,1^{k-2})}}$. Likewise, $\mathrm{SYT}_{\lambda / (2,1^{k-2})}$ denotes the set of standard Young tableaux based on skew shape $\lambda / (2,1^{k-2})$;
\item
Denote $\mathrm{Pos}(V)$ the set of positive definite matrices with respect to vector space $V$, and denote $d_{\lambda}$ the size of $\lambda$-block which should be given by $d_{\lambda}=\dim \mathbb{Y}_{\lambda / (1^{k})}+f^{\lambda / (2,1^{k-1})}$ as explained in Eq.\eqref{eq:formulas-DiagramBlock};
The block size of $\lambda$-block is $O(k^{N+1} (N-1)^{-\frac{k^2+k-2}{4}})$.
\item
$\mr{Cox}_{N}=\{ (j,j+1) \in S_{N} : k \leq j \leq N+k-2 \}$ is the set of Coxeter generators of $S_{N}$;
\item 
$\Delta_{\lambda} : S_N \to \mathrm{U}(\mathbb{C}^{d_{\lambda}}
 \otimes (\mathbb{C}^{d})^{\otimes (N+1)})$ is arisen from
$\Delta_{B} : S_N \to \mathrm{U}((\mathbb{C}^{k}
 \otimes \mathbb{C}^{d})^{\otimes N})$ with $\id_{A} \otimes \Delta_{B}(\pi) \mapsto U_{\pi}^{\lambda} \otimes U_{\pi}$ where $U^{\lambda}_{\pi}$ is the restricted representation to $S_{N}$.
\item There are at most $(N-1) d_{\lambda}^2 \times d^{N+1}$ many of constraints.
\end{itemize}
\end{theorem}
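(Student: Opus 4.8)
The approach is the textbook symmetry reduction of a semidefinite program: isolate the symmetry group, average feasible points over it, decompose the invariant cone into irreducible blocks, and rewrite the program block by block. First I would spell out, from Definition~\ref{definition:SDP-k-block-positivity-N-BSE}, that $\mathsf{SDP}_{k,N}(X)$ minimizes a linear functional of the form $\rho\mapsto\tr[(\Pi_{k}\otimes X_{(N)})\rho]$ over positive $\rho$ with $\tr\rho=1$ and $\Delta_{B}(\pi)\rho=\rho$ for all $\pi\in S_{N}$, where $\Pi_{k}$ is the maximally entangled projection on the auxiliary $\mathbb{C}^{k}$-systems produced by the reduction to $1$-block-positivity, and $X_{(N)}$ acts only on the $\mathbb{C}^{d}$-systems. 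The relevant symmetry is $G=\mathrm{U}(k)$, acting on the $\mathbb{C}^{k}$-auxiliary systems by the (twisted) action that fixes $\Pi_{k}$ — the very point of the maximally entangled construction — and trivially on the $\mathbb{C}^{d}$-systems. I would then verify that $G$ is a genuine symmetry of the SDP: the objective operator $\Pi_{k}\otimes X_{(N)}$ is $G$-fixed (because $\Pi_{k}$ is and $X_{(N)}$ ignores the $\mathbb{C}^{k}$-systems); positivity and the trace normalization are manifestly $G$-stable; and the constraint $\Delta_{B}(\pi)\rho=\rho$ is $G$-stable since the diagonal $\mathrm{U}(k)$-action on tensor copies commutes with the permutation action $\Delta_{B}$. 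Consequently, averaging any feasible $\rho$ over the Haar measure of $G$ yields a feasible, $G$-invariant $\bar\rho$ of the same objective value, so the minimization may be restricted to $G$-invariant $\rho$ at no cost.

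Next I would decompose the $G$-invariant cone via Schur--Weyl duality on the auxiliary space $(\mathbb{C}^{k})^{\otimes(N+1)}$. Because the $G$-action is twisted by the maximally entangled construction (one fundamental factor against $N$ antifundamental factors, the latter also carrying the commuting $S_{N}$-action of $\Delta_{B}$), one cannot quote plain Schur--Weyl but must combine it with the $\mathrm{U}(k)$ Pieri/branching rule for tensoring with the (anti)fundamental; a short computation should show that the surviving $\mathrm{U}(k)$-isotypic components are indexed precisely by $\lambda\vdash_{k}(N+k-1)$, that a $G$-invariant $\rho$ takes the block form $\rho=\bigoplus_{\lambda}(\id_{K_{\lambda}}\otimes\rho_{\lambda})$ with $K_{\lambda}$ the corresponding $\mathrm{U}(k)$-irreducible and $\rho_{\lambda}\geq 0$ a free positive operator on $\mathbb{C}^{d_{\lambda}}\otimes(\mathbb{C}^{d})^{\otimes(N+1)}$, and that the multiplicity space decomposes as $\mathbb{Y}_{\lambda/(1^{k})}\oplus\mathbb{Y}_{\lambda/(2,1^{k-2})}$ (so $d_{\lambda}=f^{\lambda/(1^{k})}+f^{\lambda/(2,1^{k-2})}$), the two summands reflecting the two inequivalent ways the extra box can attach.

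It then remains to transport the SDP data through the decomposition. Since $\Pi_{k}$ is $G$-invariant it is itself block diagonal and within the $\lambda$-block equals a multiple of the projector $\mathbb{P}_{\mathbb{Y}_{\lambda/(1^{k})}}$ onto the first summand — the key local computation, carried out by pairing $\Pi_{k}$ against highest-weight vectors — so, absorbing the scalars $\dim K_{\lambda}$ into a rescaling of $\rho_{\lambda}$, the objective becomes $\sum_{\lambda}\tr[(\mathbb{P}_{\mathbb{Y}_{\lambda/(1^{k})}}\otimes X_{(N)})\rho_{\lambda}]$. The constraint $\Delta_{B}(\pi)\rho=\rho$ commutes with $G$, hence descends to $\Delta_{\lambda}(\pi)\rho_{\lambda}=\rho_{\lambda}$ for the induced $S_{N}$-action $\Delta_{\lambda}$, which is equivalent to imposing it only on a generating set such as the $N-1$ Coxeter generators $\mr{Cox}_{N}$; and the normalization becomes $\sum_{\lambda}\tr\rho_{\lambda}=1$, so, a linear objective over a direct sum of cones subject to a single total-trace constraint being optimized by concentrating all the trace on one block, one may pass to the blockwise-normalized form $\tr\rho_{\lambda}=1$. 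Running the correspondence in reverse — any feasible tuple $\{\rho_{\lambda}\}$ reassembles into a $G$-invariant feasible $\rho$ of the original SDP with matching objective — gives the opposite inequality, so $\mathsf{SDP}_{k,N}(X)=\mathsf{SDP}_{k,N}^{\sym}(X)$. The ancillary estimates — block size $O(k^{N+1}(N-1)^{-(k^{2}+k-2)/4})$ and at most $(N-1)\,d_{\lambda}^{2}\,d^{N+1}$ scalar constraints — then follow from hook-length/asymptotic bounds on standard Young tableaux of the skew shapes $\lambda/(1^{k})$ and $\lambda/(2,1^{k-2})$, together with $|\mr{Cox}_{N}|=N-1$ and the count of real parameters of a Hermitian block.

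The hard part will be the representation theory in the middle step. Because the $G$-action carries the maximally entangled twist, the decomposition is not literally Schur--Weyl, and getting the labelling right — which diagrams $\lambda$ occur and why they have exactly $k$ rows, the identification of the multiplicity space with $\mathbb{Y}_{\lambda/(1^{k})}\oplus\mathbb{Y}_{\lambda/(2,1^{k-2})}$, and the proof that $\Pi_{k}$ restricts to exactly $\mathbb{P}_{\mathbb{Y}_{\lambda/(1^{k})}}$ on each block — is where essentially all the content sits. A secondary, purely combinatorial, obstacle is making the skew standard-tableau estimates uniform over all $\lambda\vdash_{k}(N+k-1)$ so as to obtain the stated block-size bound.
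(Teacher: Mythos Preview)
Your overall strategy---twirl over the $\mathrm{U}(k)$ symmetry, block-diagonalize, transport the SDP data, then impose the $S_N$ constraints blockwise via Coxeter generators---matches the paper's. The one substantive difference is in how you propose to handle the representation theory of the auxiliary space.

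You correctly identify the difficulty: the action on the auxiliary $\mathbb{C}^k$-factors is not $U^{\otimes(N+1)}$ but $\bar U\otimes U^{\otimes N}$, so plain Schur--Weyl does not apply. You propose to attack this head-on with Pieri/branching for mixed tensors and flag it as ``the hard part''. The paper instead sidesteps it entirely by \emph{dualization}: it replaces Alice's antifundamental copy by $\mathrm{Alt}^{k-1}\mathbb{C}^k$ (using $\bar V\cong\mathrm{Alt}^{k-1}V\otimes(\det)^{-1}$), so that the total auxiliary space becomes $(\mathbb{C}^k)^{\otimes(N+k-1)}$ with the ordinary diagonal $U^{\otimes(N+k-1)}$ action. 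After this trick, standard Schur--Weyl applies verbatim; the $(N+k-1)$-box labelling you were hoping would fall out of Pieri is now immediate, and the skew shapes $\lambda/(1^k)$ and $\lambda/(2,1^{k-2})$ arise transparently from the Littlewood--Richardson coupling of Alice's antisymmetrized $k-1$ slots with Bob's first slot (giving $(1^k)$ or $(2,1^{k-2})$). Your route would eventually arrive at the same answer, but the dualization is what makes the ``short computation'' you hope for actually short.

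Two smaller points where the paper's bookkeeping differs from yours. First, after dualization the full multiplicity space in the $\lambda$-block is $\mathbb{Y}_\lambda$, not $\mathbb{Y}_{\lambda/(1^k)}\oplus\mathbb{Y}_{\lambda/(2,1^{k-2})}$; the reduction to $d_\lambda=f^{\lambda/(1^k)}+f^{\lambda/(2,1^{k-2})}$ is a separate step where one observes that the remaining tableaux (those whose first $k-1$ entries are not in the first column) span an $S_N$-invariant subspace orthogonal to the support of $\Pi_k$ and may therefore be set to zero without loss. Second, all $\lambda$ with at most $k$ rows appear in the Schur--Weyl decomposition; the restriction to \emph{exactly} $k$ rows is because $\Pi_k$ (hence the objective) vanishes on blocks with fewer rows, so those blocks can be dropped from the minimization---not because they are absent from the isotypic decomposition.
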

We illustrate the statement by looking at a simple example for testing $2$-positivity (i.e. $k=2$) and $X = \id_{d} \otimes \id_{d} + \alpha d \ketbra{\phi_{d}}{\phi_{d}}$ with parameter $\alpha$. We consider levels $N=1,2,3$ of the hierarchy. The sizes of the corresponding SDPs before and after symmetry reduction are listed in Tab~\ref{fig:examplePlots}. For $N=2$, the only Young diagram is $\ytableausetup{boxsize=0.5em} \begin{ytableau} {\scriptstyle \ } & {\scriptstyle \ } \\ {\scriptstyle \ } \end{ytableau}$ and for $N=3$ the Young diagrams are $\begin{ytableau} {\scriptstyle \ } & {\scriptstyle \ } \\ {\scriptstyle \ } & {\scriptstyle \ } \end{ytableau}$, $\begin{ytableau} {\scriptstyle \ } & {\scriptstyle \ } & {\scriptstyle \ } \\ {\scriptstyle \ } \end{ytableau}$; their $\lambda / (1^{k})$ are $\begin{ytableau} {\scriptstyle \bullet } & {\scriptstyle \ } \\ {\scriptstyle \bullet } \end{ytableau}$, $\begin{ytableau} {\scriptstyle \bullet } & {\scriptstyle \ } \\ {\scriptstyle \bullet } & {\scriptstyle \ } \end{ytableau}$, $\begin{ytableau} {\scriptstyle \bullet } & {\scriptstyle \ } & {\scriptstyle \ } \\ {\scriptstyle \bullet } \end{ytableau}$, respectively; their $\lambda / (2,1^{k-2})$ are $\begin{ytableau} {\scriptstyle \bullet } & {\scriptstyle \bullet } \\ {\scriptstyle \ } \end{ytableau}$, $\begin{ytableau} {\scriptstyle \bullet } & {\scriptstyle \bullet } \\ {\scriptstyle \ } & {\scriptstyle \ } \end{ytableau}$, $\begin{ytableau} {\scriptstyle \bullet } & {\scriptstyle \bullet } & {\scriptstyle \ } \\ {\scriptstyle \ } \end{ytableau}$ 
where $\bullet$ denotes the boxes that are not to be filled with numbers for having a standard Young tableaux.
For $N = 2$, there is only one Coxeter generator $\tau_{2}=(2,3)$ that permutes the second and third systems which are the two systems belonging to Bob. The corresponding $\Delta_{\lambda}(\tau)$ is
\begin{align*}
\Delta_{\begin{ytableau} {\scriptstyle \ } & {\scriptstyle \ } \\ {\scriptstyle \ } \end{ytableau}}((2,3))
=
\begin{pmatrix}
\frac{1}{2} & \frac{\sqrt{3}}{2} \\
\frac{\sqrt{3}}{2} & -\frac{1}{2}
\end{pmatrix}
\otimes
(2,3).
\end{align*}
On left side $\begin{pmatrix}
\frac{1}{2} & \frac{\sqrt{3}}{2} \\
\frac{\sqrt{3}}{2} & -\frac{1}{2}
\end{pmatrix}$ is the representation matrix of $\tau_{2}$ under $\begin{ytableau} {\scriptstyle \ } & {\scriptstyle \ } \\ {\scriptstyle \ } \end{ytableau}$; on right side $(2,3)$ stands for the natural representation of $\tau_{2}$. Similarly, for $N=3$, there are two Coxeter generators permuting Bob's systems: $\tau_{2}=(2,3)$ and $\tau_{3}=(3,4)$. The corresponding $\Delta_{\lambda}(\tau)$ are
\begin{align*}
&
\Delta_{\begin{ytableau} {\scriptstyle \ } & {\scriptstyle \ } \\ {\scriptstyle \ } & {\scriptstyle \ } \end{ytableau}}((2,3))
=
\begin{pmatrix}
\frac{1}{2} & \frac{\sqrt{3}}{2} \\
\frac{\sqrt{3}}{2} & -\frac{1}{2}
\end{pmatrix}
\otimes
(2,3)
, \qquad
\Delta_{\begin{ytableau} {\scriptstyle \ } & {\scriptstyle \ } \\ {\scriptstyle \ } & {\scriptstyle \ } \end{ytableau}}((3,4))
=
\begin{pmatrix}
-1 & 0 \\
0 & 1
\end{pmatrix}
\otimes
(3,4)
, \\
&
\Delta_{\begin{ytableau} {\scriptstyle \ } & {\scriptstyle \ } & {\scriptstyle \ } \\ {\scriptstyle \ } \end{ytableau}} ((2,3))
=
\begin{pmatrix}
\frac{1}{2} & \frac{\sqrt{3}}{2} & 0 \\
\frac{\sqrt{3}}{2} & -\frac{1}{2} & 0 \\
0 & 0 & 1
\end{pmatrix}
\otimes (2,3),
, \qquad
\Delta_{\begin{ytableau} {\scriptstyle \ } & {\scriptstyle \ } & {\scriptstyle \ } \\ {\scriptstyle \ } \end{ytableau}} ((3,4))
=
\begin{pmatrix}
1 & 0 & 0 \\
0 & \frac{1}{3} & \frac{2\sqrt{2}}{3} \\
0 & \frac{2\sqrt{2}}{3} & -\frac{1}{3}
\end{pmatrix}
\otimes (3,4),
\end{align*}
The corresponding $\Delta_{\lambda}$ is the representation defined by irreducible representation $\lambda$ tensoring canonical permutation representation. One could refer to Eq.\eqref{eq:ex-Delta2,1-2,3},\eqref{eq:ex-Delta2,2},\eqref{eq:ex-Delta3,1}.
The minimal values of hierarchies $N=1,2,3$ are plotted in Fig.~\ref{fig:examplePlots} which were done using Intel Core i5 with16 GB of RAM memory \cite{qian_chen_2025_kbp-example}. The reduced SDPs is solved faster than unreduced SDPs.
\begin{table}[h!]
	\centering
	\small
	\begin{tabular}{|c|c|c|c|c|c|c|c|c|}
		\hline
		$d$
		& \multicolumn{3}{c|}{$N=2$}
		& \multicolumn{5}{c|}{$N=3$}\\
		\cline{2-9}
		& unreduced
		& size of $\rho_{\ytableausetup{boxsize=0.35em} \begin{ytableau} {\scriptstyle \ } & {\scriptstyle \ } \\ {\scriptstyle \ } \end{ytableau}}$
		& $d_{\begin{ytableau} {\scriptstyle \ } & {\scriptstyle \ } \\ {\scriptstyle \ } \end{ytableau}}$
		& unreduced
		& size of $\rho_{\begin{ytableau} {\scriptstyle \ } & {\scriptstyle \ } \\ {\scriptstyle \ } & {\scriptstyle \ } \end{ytableau}}$
		& size of $\rho_{\begin{ytableau} {\scriptstyle \ } & {\scriptstyle \ } & {\scriptstyle \ } \\ {\scriptstyle \ } \end{ytableau}}$
		& $d_{\begin{ytableau} {\scriptstyle \ } & {\scriptstyle \ } \\ {\scriptstyle \ } & {\scriptstyle \ } \end{ytableau}}$
		& $d_{\begin{ytableau} {\scriptstyle \ } & {\scriptstyle \ } & {\scriptstyle \ } \\ {\scriptstyle \ } \end{ytableau}}$ \\
		\hline
		$2$ & 64 & $16=2^3 \times 2$ & 2 & 256 & $32=2^4 \times 2$ & $48=2^4 \times 3$ & 2 & 3 \\
		\hline
		$3$& 216& $54=3^3 \times 2$ & 2 & 1296 & $162=3^4 \times 2$ & $243=3^4 \times 3$ & 2 & 3 \\
		\hline
		$4$& 512 & $128=4^3 \times 2$ & 2 & 4096 & $512=4^4 \times 2$ & $768=4^4 \times 3$ & 2 & 3 \\
		\hline
		$5$& 1000 & $250=5^3 \times 2$ & 2 & 10000 & $1250=5^4 \times 2$ & $1875=5^4 \times 3$ & 2 & 3 \\
		\hline
	\end{tabular}
	\caption{The comparison of the reductions obtained by considering unitary invariance under the action of $\mathrm{U}(k)^{\otimes (N+k-1)}$ on the auxiliary spaces. Note that the size of $\rho_{\lambda}$ is $d^{N+1} \cdot d_{\lambda}$.}
\end{table}
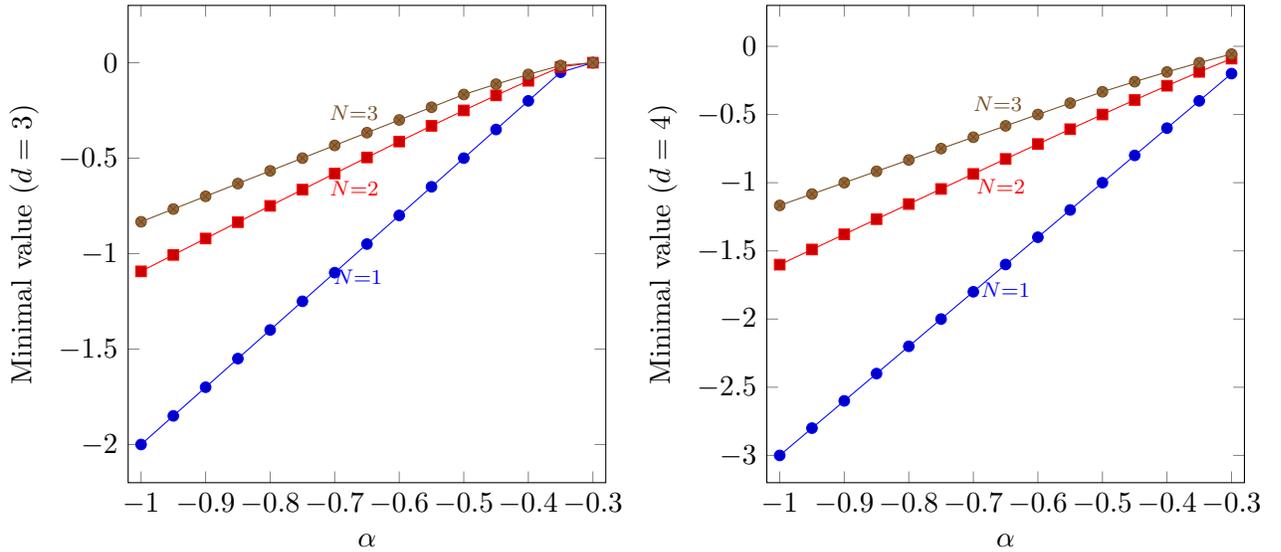
\begin{figure}[ht]
\begin{tikzpicture}
\begin{axis}[
width=2.5in,
height=2.5in,
at={(1.048in,0.726in)},
scale only axis,
xmin=-1.02,
xmax=-0.28,
ymin=-2.2,
ymax=0.3,
xlabel=$\alpha$,
ylabel=Minimal value ({$d=3$}),
axis background/.style={fill=white},
legend style={legend cell align=left, align=left, draw=white!15!black}]
\addplot table[row sep=crcr]{
-1.0 -2 \\
-0.95 -1.85 \\
-0.9 -1.7 \\
-0.85 -1.55 \\
-0.8 -1.4 \\
-0.75 -1.25 \\
-0.7 -1.1 \\
-0.65 -0.95 \\
-0.6 -0.8 \\
-0.55 -0.65 \\
-0.5 -0.5 \\
-0.45 -0.35 \\
-0.4 -0.2 \\
-0.35 -0.05 \\
-0.3 -2.6491e-16 \\
}
node[below=1mm,pos=0.5] {${\scriptstyle{N=1}}$};
\addplot table[row sep=crcr]{
-1.0 -1.09307 \\
-0.95 -1.00697 \\
-0.9 -0.921066 \\
-0.85 -0.835391 \\
-0.8 -0.75 \\
-0.75 -0.664963 \\
-0.7 -0.580374 \\
-0.65 -0.496361 \\
-0.6 -0.413104 \\
-0.55 -0.330859 \\
-0.5 -0.25 \\
-0.45 -0.171084 \\
-0.4 -0.0949 \\
-0.35 -0.0228 \\
-0.3 -2.7195e-16 \\
}
node[below=1mm,pos=0.5] {${\scriptstyle{N=2}}$};
\addplot table[row sep=crcr]{
-1.0 -0.833333 \\
-0.95 -0.766667 \\
-0.9 -0.7 \\
-0.85 -0.633333 \\
-0.8 -0.566667 \\
-0.75 -0.5 \\
-0.7 -0.433333 \\
-0.65 -0.366667 \\
-0.6 -0.3 \\
-0.55 -0.233333 \\
-0.5 -0.166667 \\
-0.45 -0.1125 \\
-0.4 -0.0615 \\
-0.35 -0.0146 \\
-0.3  -2.8998e-16 \\
}
node[above=1mm,pos=0.5] {${\scriptstyle{N=3}}$};
\end{axis}
\end{tikzpicture}
\begin{tikzpicture}
\begin{axis}[
width=2.5in,
height=2.5in,
at={(1.048in,0.726in)},
scale only axis,
xmin=-1.02,
xmax=-0.28,
ymin=-3.2,
ymax=0.3,
xlabel=$\alpha$,
ylabel=Minimal value ({$d=4$}),
axis background/.style={fill=white},
legend style={legend cell align=left, align=left, draw=white!15!black}]
\addplot table[row sep=crcr]{
-1.0 -3 \\
-0.95 -2.8 \\
-0.9 -2.6 \\
-0.85 -2.4 \\
-0.8 -2.2 \\
-0.75 -2 \\
-0.7 -1.8 \\
-0.65 -1.6 \\
-0.6 -1.4 \\
-0.55 -1.2 \\
-0.5 -1 \\
-0.45 -0.8 \\
-0.4 -0.6 \\
-0.35 -0.4 \\ 
-0.3 -0.2 \\ 
}
node[below=1mm,pos=0.5] {${\scriptstyle{N=1}}$};
\addplot table[row sep=crcr]{
-1.0 -1.60128 \\
-0.95 -1.48988 \\
-0.9 -1.37862 \\
-0.85 -1.26752 \\
-0.8 -1.15664 \\
-0.75 -1.04601 \\
-0.7 -0.935695 \\
-0.65 -0.825789 \\
-0.6 -0.71641 \\
-0.55 -0.60773 \\
-0.5 -0.5 \\
-0.45 -0.393599 \\
-0.4 -0.289117 \\
-0.35 -0.1875 \\
-0.3 -0.0902969 \\
}
node[below=1mm,pos=0.5] {${\scriptstyle{N=2}}$};
\addplot table[row sep=crcr]{
-1.0 -1.16667 \\
-0.95 -1.08333 \\
-0.9 -1 \\
-0.85 -0.916667 \\
-0.8 -0.833333 \\
-0.75 -0.75 \\
-0.7 -0.666667 \\
-0.65 -0.583333 \\
-0.6 -0.5 \\
-0.55 -0.416667 \\
-0.5 -0.333333 \\
-0.45 -0.2594 \\
-0.4 -0.1880 \\
-0.35 -0.1201 \\
-0.3 -0.0569 \\
}
node[above=1mm,pos=0.5] {${\scriptstyle{N=3}}$};
\end{axis}
\end{tikzpicture}
  \caption{The minimal values for varying $\alpha$.
  }
  \label{fig:examplePlots}
\end{figure}

The paper is organized as follows. In Section~\ref{sec:KPB:SDP-hierarchy} we present our notation and terminology and introduce the reader to the Schmidt number of density operators, $k$-block-positivity, the trick of $k$-extension that tensors $k$-dimensional maximally entangled projector, and the extendibility hierarchy. In Section~\ref{sec:symmetric-reduction-unitary} we implement unitary twirling for symmetry reduction. Using dualization, we convert $\bar{U} \otimes U$ symmetry, which arises from conjugation action on the $k$-dimensional maximally entangled projector, to $U^{\otimes k}$ symmetry.
We then apply Schur-Weyl duality to block diagonalize the tensor space, leading to Theorem~\ref{thm:Twirling-PartialUnitary}. Section~\ref{sec:PermutationalSym} follows the block structure, showing how to implement permutational symmetry in Subsection~\ref{subsec:symmetric-reduction-permutation}, and analyzing the asymptotic ratio of sizes $\dim \mathbb{Y}_{\lambda / (1^{k})}$ (contributing to objective function) and $\dim \mathbb{Y}_{\lambda / (2,1^{k-2})}$ (balancing the trace due to permutation constraints) in Subsection~\ref{subsec:Asymptotic Diagram-blockSize}.



 
\section{Semidefinite programming relaxations for $k$-block-positivity} \label{sec:KPB:SDP-hierarchy}

\paragraph{Notation.} Let the symbol $\mathbb{M}$ stand for matrix spaces and $\id$ for the identity operator. By default, we set $\mathbb{C}^{d}$ as the unextended spaces for Alice and Bob. Denote the Schmidt rank of a pure bipartite state $\ket{v}$ by $sr(v)$. Denote the Schmidt number of a mixed bipartite state $\rho$ by $sn(\rho)$.

We define $k$-extension by introducing auxiliary $\mathbb{C}^{k}$ on each subsystem $\mathbb{C}^{d}$ via $\mathbb{C}^{d} \to \mathbb{C}^{k} \otimes \mathbb{C}^{d}$.
The local subsystem after $k$-extension $\mathcal{H}_{A} \cong \mathcal{H}_{B} \cong \mathbb{C}^{k} \otimes \mathbb{C}^{d} \cong \mathbb{C}^{kd}$.
In the later sections, we will introduce dualization $\mathrm{Alt}^{k-1} \mathbb{C}^{k} \cong \mathbb{C}^{k}$ on Alice's auxiliary and still denote $\mathcal{H}_{A} = (\mathrm{Alt}^{k-1} \mathbb{C}^{k}) \otimes \mathbb{C}^{d}$.

Denote the normalized projection of $k$-dimensional maximally entangled state by $\ket{\phi_{k}}$.
We simplify $\id_{A} \otimes \pi$ to $\pi$ when there is no confusion, where $\pi$ is a permutation on Bob's extension.

The symbol $\lambda \vdash_{k} n$ means $\lambda$ a Young diagram with $n$ boxes and exactly $k$ rows. The symbol $\mathbb{Y}_{\lambda}$ stands for the Specht module associative to the Young diagram $\lambda$. Symbols $\mathbb{U}_{k, \lambda}$ and $\mathbb{U}_{d, \lambda}$ for irreducible representations of unitary groups $\mathrm{U}(k)$ and $\mathrm{U}(d)$ respectively.
Denote the skew Young diagram by $\lambda / \mu$ where $\lambda$ and $\mu$ are two Young diagrams with $\lambda  \supset \mu$.

We denote Schur basis under $\lambda$ by $\ket{p_{\lambda}} \otimes \ket{q_{\lambda}}$ or $\ket{p_{\lambda}, q_{\lambda}}$. The letter $T$ denotes the Schur transform that sends the computational basis to the Schur basis, the calligraphic letter $\mathcal{T}$ denotes the twirling operation, and $\mathcal{T}_{U}$ for auxiliary $\mr{U}(k)$-twirling \cite{Bartlett:2006tzx}, in particular.




\subsection{$k$-block positivity, $k$-extension, and the related semidefinite programming}

We present the mathematical setup for the $k$-block-positivity. Consider a bipartite system $\mathbb{C}^{d} \otimes \mathbb{C}^{d}$. Any bipartite pure state with at most Schmidt rank $k$, can be written into the form below:
\begin{align}
\ket{\psi}=\sum_{p=1}^{k} \ket{z_{p}} \otimes \ket{w_{p}}, \: \text{ where for all $p$, both } \ket{z_{p}}, \: \ket{w_{p}} \in \mathbb{C}^{d}.
\end{align}
The pure states with at most Schmidt rank $k$ form a subset of the set of all pure states,
\begin{align}
\mathrm{SR}_{k} (d) = \{ \ket{\psi} \in \mathbb{C}^{d} \otimes \mathbb{C}^{d} : sr({\psi}) \leq k \}. \label{eq:Set:SRk}
\end{align}
A Hermitian operator $X \in \mathrm{Herm}_{d^2 \times d^2}(\mathbb{C})$ is said to be $k$-block-positive if $X$'s expectation value is nonnegative for all the members of $\mathrm{SR}_{k} (d)$, i.e., $\bra{\psi} X \ket{\psi} \geq 0$ for all $\ket{\psi} \in \mathrm{SR}_{k} (d)$.
A mixed state $\rho$ is said to have the Schmidt number $k$, denoted by $sn(\rho)=k$, if there exists an ensemble $\{ p_{i} , \psi_{i} \}$ such that $\rho=\sum_{i} p_{i} \ketbra{\psi_{i}}{\psi_{i}}$ and all $sr(\psi_{i}) \leq k$ \cite{PhysRevA.61.040301,PhysRevA.63.050301}.
The set of Schmidt number $k$ states is denoted by
\begin{align}
\mathrm{SN}_{k} (d) = \{ \rho \in \mathrm{Herm}(\mathbb{C}^{d} \otimes \mathbb{C}^{d})_{+} : sn(\rho) \leq k \}. \label{eq:Set:SNk}
\end{align}
A Hermitian operator $X \in \mathrm{Herm}_{d^2 \times d^2}(\mathbb{C})$ is said to be $k$-block-positive if and only if $\tr (X \rho) \geq 0$ for all $sn(\rho) \leq k$. The following optimization problem is formulated to test the $k$-block-positivity.
\begin{definition}[Optimization: $k$-block-positivity] \label{definition:SDP-k-block-positivity-original}
A Hermitian operator $X \in \mathrm{Herm}_{d^2 \times d^2}(\mathbb{C})$ is $k$-block-positive if and only if the following optimization problem gives nonnegative optimal value,
\begin{align}
&
\min \tr{X} \rho,
\\
&
\text{subject to } \:
\rho \in \mathrm{SN}_{k} (d), \: \text{ and } \tr\rho=1.
\nonumber
\end{align}
\end{definition}
Since $\mathrm{SN}_{1} \subset \mathrm{SN}_{2} \subset \cdots \subset \mathrm{SN}_{k} \subset \cdots \subset \mathrm{SN}_{d-1} \subset \mathrm{SN}_{d}$, the minimal values satisfy the sequence of inequalities:
\begin{align}
\min_{\rho \in \mathrm{SN}_{d}} \tr{X} \rho \leq \min_{\rho \in \mathrm{SN}_{d-1}} \tr{X} \rho \leq \cdots \leq \min_{\rho \in \mathrm{SN}_{k}} \tr{X} \rho \leq \cdots \leq \min_{\rho \in \mathrm{SN}_{2}} \tr{X} \rho \leq \min_{\rho \in \mathrm{SN}_{1}} \tr{X} \rho.
\end{align}
\begin{definition}[$k$-extension]
We define $k$-extension $\mathbb{C}^{d} \to \mathbb{C}^{k} \otimes \mathbb{C}^{d}$.
For any $X \in \mathrm{Herm}(\mathbb{C}^{d} \otimes \mathbb{C}^{d})$, its $k$-extension $X_{k} \in \mathrm{Herm}(\mathbb{C}^{kd} \otimes \mathbb{C}^{kd})$ is defined as,
\begin{align}
&
X_{k}:=
\ketbra{\phi_{k}}{\phi_{k}} \otimes X, \:
\text{ where }
\ket{\phi_{k}}=\sum_{i=1}^{k} \frac{1}{\sqrt{k}}\ket{i^{*}i},
\end{align}
On the other hand, any $\rho_{k} \in \mathrm{Herm}(\mathbb{C}^{kd} \otimes \mathbb{C}^{kd})_{+}$ can be written as
\begin{align}
&
\rho_{k}:=
\sum_{i_{0},i_{1},j_{0},j_{1}=1}^{k}
\ketbra{i_{0} i_{1} }{j_{0} j_{1}}
\otimes
\rho_{i_{0} i_{1}, j_{0} j_{1}}, \:
\text{ where }
\rho_{i_{0} i_{1}, j_{0} j_{1}}
\in \mathbb{M}_{d^2 \times d^2}(\mathbb{C}).
\end{align}
\end{definition}
\begin{lemma}[$k$-block-positivity testing via $k$-extension] \label{lemma:SDP-k-block-positivity}
$X$ is $k$-block positive if and only if $X_{k}$ is block positive. Thus, the $k$-block-positivity testing can be formulated as
\begin{align}
&
\min \tr{ X_{k} \rho_{k}},
\\
&
\text{subject to } \:
\rho_{k} \in \mathrm{Sep}(\mathbb{C}^{kd} \otimes \mathbb{C}^{kd}), \: \text{ and } \tr\rho_{k}=1.
\nonumber
\end{align}
\end{lemma}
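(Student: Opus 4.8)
The plan is to reduce the statement to one elementary identity relating product vectors on the $k$-extended bipartite space to Schmidt-rank-$\le k$ vectors on the original space (this is the reduction of \cite{johnston2012norms}), and then to conclude with the standard duality between block-positive operators and separable states. Throughout I would split the $k$-extended local spaces as $\mathbb{C}^{kd}\cong\mathbb{C}^{k}_{A'}\otimes\mathbb{C}^{d}_{A}$ for Alice and $\mathbb{C}^{kd}\cong\mathbb{C}^{k}_{B'}\otimes\mathbb{C}^{d}_{B}$ for Bob, so that after the obvious reordering of tensor factors $X_{k}=(\ketbra{\phi_{k}}{\phi_{k}})_{A'B'}\otimes X_{AB}$.

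The one computation to carry out honestly is the following. Given a product vector $\ket{a}\otimes\ket{b}\in\mathcal{H}_{A}\otimes\mathcal{H}_{B}$, expand $\ket{a}=\sum_{i=1}^{k}\ket{i}_{A'}\otimes\ket{a_{i}}_{A}$ and $\ket{b}=\sum_{j=1}^{k}\ket{j}_{B'}\otimes\ket{b_{j}}_{B}$ with $\ket{a_{i}},\ket{b_{j}}\in\mathbb{C}^{d}$; since the computational basis is real (so $\ket{i^{*}}=\ket{i}$), contracting the auxiliary pair $A'B'$ against $\bra{\phi_{k}}$ yields $\ket{\chi}:=((\bra{\phi_{k}})_{A'B'}\otimes\id_{AB})(\ket{a}_{A'A}\otimes\ket{b}_{B'B})=\frac{1}{\sqrt{k}}\sum_{m=1}^{k}\ket{a_{m}}_{A}\otimes\ket{b_{m}}_{B}$, a sum of $k$ product terms, hence $sr(\chi)\le k$; and a one-line computation gives $\bra{a}\bra{b}X_{k}\ket{a}\ket{b}=\bra{\chi}X\ket{\chi}$. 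Conversely, every $\ket{\psi}=\sum_{m=1}^{k}\ket{z_{m}}\otimes\ket{w_{m}}\in\mathrm{SR}_{k}(d)$ is realized as $\sqrt{k}\,\ket{\chi}$ by the choice $\ket{a_{m}}=\ket{z_{m}}$, $\ket{b_{m}}=\ket{w_{m}}$; here it is crucial that the number of Schmidt terms is at most the dimension $k$ of the auxiliary space.

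With this identity in hand the equivalence ``$X$ is $k$-block-positive $\iff$ $X_{k}$ is block-positive'' is immediate: block-positivity of $X_{k}$ says $\bra{a}\bra{b}X_{k}\ket{a}\ket{b}\ge 0$ for every product vector, which by the identity is literally the assertion that $\bra{\chi}X\ket{\chi}\ge 0$ for every $\ket{\chi}$ of Schmidt rank $\le k$, i.e.\ that $X$ is $k$-block-positive; for the converse, given $\ket{\psi}\in\mathrm{SR}_{k}(d)$ one uses the associated $\ket{a},\ket{b}$ and gets $0\le\bra{a}\bra{b}X_{k}\ket{a}\ket{b}=\frac{1}{k}\bra{\psi}X\ket{\psi}$. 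Finally, to pass to the semidefinite program I would use the routine facts that $\mathrm{Sep}(\mathbb{C}^{kd}\otimes\mathbb{C}^{kd})$ is the convex hull of the pure product states $\ketbra{a}{a}\otimes\ketbra{b}{b}$ and that $\tr(X_{k}(\ketbra{a}{a}\otimes\ketbra{b}{b}))=\bra{a}\bra{b}X_{k}\ket{a}\ket{b}$: by linearity and convexity $\tr(X_{k}\rho_{k})\ge 0$ for all $\rho_{k}\in\mathrm{Sep}$ if and only if $X_{k}$ is block-positive, and imposing $\tr\rho_{k}=1$ does not change the sign of the infimum, so the displayed optimization has nonnegative optimal value exactly when $X$ is $k$-block-positive, recovering the test of Definition~\ref{definition:SDP-k-block-positivity-original}. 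I do not anticipate a genuine obstacle; the only points needing care are the bookkeeping of the tensor-factor reordering $A'AB'B\leftrightarrow(A'A)(B'B)$ in the key identity and the (merely cosmetic) complex conjugation in the definition of $\ket{\phi_{k}}$.
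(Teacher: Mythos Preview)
Your proof is correct and follows essentially the same route as the paper: both hinge on the identity $\bra{\psi}X\ket{\psi}=\bra{z\otimes w}(k\,\ketbra{\phi_{k}}{\phi_{k}}\otimes X)\ket{z\otimes w}$ for $\ket{\psi}=\sum_{p=1}^{k}\ket{z_{p}}\otimes\ket{w_{p}}$ and $\ket{z}=\sum_{p}\ket{p}\otimes\ket{z_{p}}$, $\ket{w}=\sum_{q}\ket{q}\otimes\ket{w_{q}}$. You spell out both implications and the passage from pure product states to $\mathrm{Sep}$ a bit more explicitly than the paper does, but the core argument is the same.
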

Lemma~\ref{lemma:SDP-k-block-positivity} can be proved by considering the parameterization $\ket{\psi}=\sum_{p=1}^{k} \ket{z_{p} \otimes w_{p}}$ for Schmidt rank $k$ pure state, then one has Hermitian polynomial
\begin{align}
\bra{\psi} X \ket{\psi}
=
\sum_{p,q=1}^{k} \sum_{i,j,l,m=1}^{d} X_{lm,ij} z_{ip} w_{jp} \bar{z}_{lq} \bar{w}_{mq}
=
\bra{ z \otimes w }
(k \ketbra{\phi_{k}}{\phi_{k}} \otimes X)
\ket{z \otimes w},
\end{align}
with $\ket{z}=\sum_{p=1}^{k} \ket{ p \otimes z_{p} }$ and $\ket{w}=\sum_{q=1}^{k} \ket{ q \otimes w_{q} }$.
Through a straightforward calculation, one may realize that $k$-extension amounts to purifying Schmidt number $k$ to Schmidt number $1$.
The polynomial $X(z,w) \equiv \bra{\psi} X \ket{\psi}$ could be viewed as a generalization of \cite{harrow2019limitations}.
Although $\| \ket{\psi} \|_{2}$ is unnecessarily equal to $\| \ket{z \otimes w}\|_{2}$,
positivity testing only cares about the sign of the minimal value, and therefore setting $\tr\rho_{k}=c>0$ with $c \neq 1$ is allowed.
\begin{remark}
Even though in general the minimal values given from the optimization problems Definition~\ref{definition:SDP-k-block-positivity-original} and Lemma~\ref{lemma:SDP-k-block-positivity} are different, their signs are the same.
\end{remark}




\subsection{SDP relaxation via extendibility hierarchy} \label{sec:SDP:ext-hierarchy}

The $k$ block-positivity test is now converted into block-positivity testing through the lemma~\ref{lemma:SDP-k-block-positivity}, the optimization problem could be then solved by introducing relaxation such as extendibility hierarchy and Doherty-Parrilo-Spedalieri (DPS) hierarchy \cite{doherty2004DPShierarchy,christandl2007DeFinetti,PhysRevA.80.052306,fang2021SOS}. In this paper, we will use the extendibility hierarchy.

Denote the symmetric extension of the $N$-level by $\rho_{k} \mapsto \rho_{k,N}$ where $N$ is the number of Bob's copies, and correspondingly we extend $X_{k} \mapsto X_{k,N}$ by $X_{k,N}=\Pi_{k} \otimes \id_{k}^{\otimes (N-1)} \otimes X \otimes \id_{d}^{\otimes (N-1)}$.

A bipartite state $\rho_{AB}$ is said to be $N$-(symmetric) extendible \cite{christandl2007DeFinetti}, if Bob's (likewise for Alice's) system can be extended into $N$-partite $\rho_{AB_{1} \cdots B_{N}}$, such that the Bob's extension is $N$-exchangeable $\rho_{AB_{1} \cdots B_{N}}=(\id_{A} \otimes \pi_{B}) \rho_{AB_{1} \cdots B_{N}} (\id_{A} \otimes \pi^{-1}_{B})$ or $N$-Bose-exchangeable $\rho_{AB_{1} \cdots B_{N}}=(\id_{A} \otimes \pi_{B}) \rho_{AB_{1} \cdots B_{N}}=\rho_{AB_{1} \cdots B_{N}} (\id_{A} \otimes \pi_{B})$ for all permutation $\pi_{B} \in S_{N}$, meanwhile $\rho_{AB} $ can be retrieved via partial trace the extension, i.e., $\rho_{AB} \equiv \rho_{AB_{1}}=\tr_{B_{2} \cdots B_{N}}\rho_{AB_{1} \cdots B_{N}}$.

In our problem, permutation is defined for the $(\mathbb{C}^{kd})^{\otimes N} \cong (\mathbb{C}^{k} \otimes \mathbb{C}^{d})^{\otimes N}$ due to $k$-extension $\otimes$, given by the following map:
\begin{align}
\Delta_{B} : S_{N} \to \mathrm{U}((\mathbb{C}^{k} \otimes \mathbb{C}^{d})^{\otimes N}).
\end{align}
A state is separable if and only if it is infinitely-exchangeable, or infinitely-Bose-exchangeable. The Bose exchangeability is stronger than the $N$ exchangeability, with faster convergence in quantum de Finetti theorem \cite{christandl2007DeFinetti}, but the limit case is the same. From now on, we set $\rho_{k,N}$ to be a $N$-symmetric bosonic extension ($N$-BSE) of $\rho_{k}$ provided that $\rho_{k}$ is $N$-Bose-exchangeable,
\begin{align}
\rho_{k,N}
=
( \id_{A} \otimes \Delta_{B}(\pi) ) \rho_{k,N}
=
\rho_{k,N} ( \id_{A} \otimes \Delta_{B}(\pi) ), \ \forall \pi \in S_{N}, \:
\text{ where } \rho_{k}=\tr_{\mathcal{H}_{B}^{\otimes (N-1)}}\rho_{k,N}.
\end{align}
We define SDP with $N$-BSE $\rho_{k,N}$ instead of $N$ -Bose-extendible $\rho_{k}$.
\begin{definition}[$k$-block-positivity testing SDP with N-BSE] \label{definition:SDP-k-block-positivity-N-BSE}
The $N$ level of extendibility hierarchy SDP is defined as below,
\begin{align}
&
\mathsf{SDP}_{k,N}(X):=
\min \tr{ X_{k,N} \rho_{k,N}}, \\
&
\text{subject to } \:
\rho_{k,N} \in \mathrm{Bos}_{\mathcal{H}_{B}}(\mathcal{H}_{A} , \mathcal{H}_{B}^{\otimes N}), \ \text{ and } \tr\rho_{(k, N)}=1.
\nonumber
\end{align}
\end{definition}
Since the SDP is valued by the points in permutation symmetric space, we can define projector $P_{k, N}=\frac{1}{N!} \sum_{\pi \in S_{N}} \id_{A} \otimes \Delta(\pi)$. By this, the SDP problem can be translated into a solving min-eigenvalue problem following Courant-Fischer-Weyl min-max theorem.
\begin{proposition} \label{pro:SDP-Eigenvalue}
Define $P_{k, N}=\frac{1}{N!} \sum_{\pi \in S_{N}} \id_{A} \otimes \Delta_{B}(\pi)$, the $\mathsf{SDP}_{k,N}(X)$ can be computed by solving the following minimal eigenvalue problem,
\begin{align}
\mathsf{SDP}_{k,N}(X)
=
\min \left\{ {\mathrm{eig}} \left[
\frac{1}{(N!)^2} \sum_{ \pi , \sigma \in S_{N} } (\id_{A} \otimes \Delta_{B}(\pi) ) X_{(k,N)} (\id_{A} \otimes \Delta_{B}(\sigma) )
\right] \right\}.
\end{align}
\end{proposition}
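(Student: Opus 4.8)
The plan is to rewrite the feasible set of $\mathsf{SDP}_{k,N}(X)$ as the set of density operators supported on one fixed subspace, push the symmetrization onto the cost operator, and then invoke the variational characterization of the least eigenvalue.

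First I would observe that, for a Hermitian operator $\rho_{k,N}$, membership in $\mathrm{Bos}_{\mathcal H_B}(\mathcal H_A,\mathcal H_B^{\otimes N})$ is equivalent to $P_{k,N}\rho_{k,N}P_{k,N}=\rho_{k,N}$, i.e.\ to $\rho_{k,N}$ being supported on $W:=\mathrm{Image}(P_{k,N})=\mathcal H_A\otimes\sym^N(\mathcal H_B)$. Indeed, since $\Delta_B$ is a unitary representation, $P_{k,N}=\tfrac1{N!}\sum_{\pi\in S_N}\id_A\otimes\Delta_B(\pi)$ is self-adjoint and idempotent, hence an orthogonal projector with $(\id_A\otimes\Delta_B(\pi))P_{k,N}=P_{k,N}=P_{k,N}(\id_A\otimes\Delta_B(\pi))$ for every $\pi$; averaging the defining equalities $\rho_{k,N}=(\id_A\otimes\Delta_B(\pi))\rho_{k,N}=\rho_{k,N}(\id_A\otimes\Delta_B(\pi))$ over $\pi$ gives $\rho_{k,N}=P_{k,N}\rho_{k,N}=\rho_{k,N}P_{k,N}$, and the converse is immediate from $(\id_A\otimes\Delta_B(\pi))P_{k,N}=P_{k,N}$. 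On this feasible set the cost only sees the symmetrized operator: $\tr(X_{(k,N)}\rho_{k,N})=\tr(M\rho_{k,N})$ with $M:=P_{k,N}X_{(k,N)}P_{k,N}$, and expanding both projectors shows $M=\tfrac1{(N!)^2}\sum_{\pi,\sigma\in S_N}(\id_A\otimes\Delta_B(\pi))X_{(k,N)}(\id_A\otimes\Delta_B(\sigma))$, the Hermitian operator whose least eigenvalue appears in the claim.

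Next I would use that the set of density operators supported on $W$ is the compact convex hull of the rank-one projectors onto unit vectors of $W$, so that, $M$ mapping $W$ into itself (it is of the form $P_{k,N}(\,\cdot\,)P_{k,N}$), $\mathsf{SDP}_{k,N}(X)=\min\{\tr(M\rho):\rho\succeq 0,\ \tr\rho=1,\ P_{k,N}\rho P_{k,N}=\rho\}=\min\{\langle\psi|M|\psi\rangle:|\psi\rangle\in W,\ \|\psi\|=1\}=\lambda_{\min}(M|_W)$ by the Courant--Fischer principle. It then remains to match $\lambda_{\min}(M|_W)$ with $\min\{\mathrm{eig}(M)\}$ over the full space. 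Since $M=P_{k,N}X_{(k,N)}P_{k,N}$ annihilates $W^\perp$ and is Hermitian, with respect to $\mathcal H_A\otimes\mathcal H_B^{\otimes N}=W\oplus W^\perp$ it is block diagonal with blocks $M|_W$ and $0$, whence $\min\{\mathrm{eig}(M)\}=\min\{0,\lambda_{\min}(M|_W)\}$; the claimed identity follows once $\mathsf{SDP}_{k,N}(X)\le 0$ is established. For this I would exhibit an explicit feasible point: choose product vectors $|a\rangle\otimes|e\rangle\in\mathbb C^k_A\otimes\mathbb C^k_{B_1}$ orthogonal to $|\phi_k\rangle$ (possible when $k\ge 2$), arbitrary $|a'\rangle\in\mathbb C^d_A$, $|f\rangle\in\mathbb C^d_B$, and set $\rho_{k,N}=\ketbra{aa'}{aa'}_A\otimes(\ketbra{ef}{ef}_B)^{\otimes N}$; it is trace-one, permutation-symmetric and product, hence feasible, while the maximally entangled projector $\Pi_k$ (proportional to $\ketbra{\phi_k}{\phi_k}$) built into $X_{(k,N)}$ kills it, giving $\tr(X_{(k,N)}\rho_{k,N})=0$.

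The algebra above is routine; the one place that needs care is the last step, because $M$ carries a large spurious kernel (all of $W^\perp$, the non-symmetric directions of $\mathcal H_B^{\otimes N}$), so the identity secretly rests on the structural feature that $X_{(k,N)}$ always factors through the rank-deficient projector $\Pi_k$, which forces $\mathsf{SDP}_{k,N}(X)\le 0$. (Equivalently, $\mathsf{SDP}_{k,N}(X)$ lower-bounds $\min_{\rho_k\in\mathrm{Sep}}\tr(X_k\rho_k)\le 0$ via Lemma~\ref{lemma:SDP-k-block-positivity} and the same product state.)
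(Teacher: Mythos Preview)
Your argument is correct and follows the same route the paper signals: the paper offers only the one-line justification ``the SDP problem can be translated into \dots\ following Courant--Fischer--Weyl min-max theorem'' and gives no further proof, so your write-up is in fact more complete than the original. In particular, your observation that $M=P_{k,N}X_{(k,N)}P_{k,N}$ acquires a spurious zero eigenvalue on $W^\perp$, so that the stated equality needs $\mathsf{SDP}_{k,N}(X)\le 0$, is a genuine point the paper does not address; your feasible product state orthogonal to $|\phi_k\rangle$ (available for $k\ge 2$, the regime the paper cares about) cleanly closes that gap.
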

In many cases, computing the minimal eigenvalue is computationally simpler than solving the associated SDP. However, since this work also addresses the estimation of computational resources, the matrix dimension involved in the SDP serves as a metric for resource quantification. Consequently, our subsequent analysis will focus on the SDP framework.

\medskip

Solving the SDP Definition~\ref{definition:SDP-k-block-positivity-N-BSE} requires tremendous computational resource. To feel it, one may look at the size of the input PSD matrices, which is $(kd)^{N+1} \times (kd)^{N+1}$. In order to run the SDP more efficiently, we will make use of $\bar{U} \otimes U$-symmetry for symmetry reduction, then take $S_{N}$-symmetry as the constraints in SDP.




\section{The SDP reduction via $\mathrm{U}(k)$ symmetry} \label{sec:symmetric-reduction-unitary}

This section is dedicated to symmetry reduction based on the $\bar{U} \otimes U$-symmetry on auxiliary spaces. We will convert the $\bar{U} \otimes U$ symmetry to $U^{\otimes k}$, then use Schur transform to diagonalize the twirled state, labeling the blocks by Young diagrams from tensor decomposition.




\subsection{Schur transform}
\label{sec:schur}
Before symmetry reduction, let us briefly review the Schur transform. The tensor representation $V^{\otimes n}$ for any $n$ admits a decomposition due to Schur-Weyl duality,
\begin{align}
V^{\otimes n} \cong \bigoplus_{\lambda \vdash n} \mathbb{Y}_{\lambda} \otimes \mathbb{U}_{\lambda},
\end{align}
with $\mathbb{Y}_{\lambda}$ irreducible representation of $S_{n}$ and $\mathbb{U}_{\lambda}$ irreducible representation of $\mathrm{GL}(V)$.
Set $V=\mathbb{C}^{k}$. The isomorphism is realized by Schur transform \cite{bacon2005Schur},
\begin{align}
T: (\mathbb{C}^{k})^{\otimes n} \to \bigoplus_{\lambda \vdash_{k} n} \mathbb{Y}_{\lambda} \otimes \mathbb{U}_{\lambda}.
\end{align}
Write Schur basis as $\ket{ \lambda , p_{\lambda} , q_{\lambda} }=\ket{\lambda} \otimes \ket{p_{\lambda}} \otimes \ket{q_{\lambda} }$ with $p_{\lambda}=1, \cdots , \dim \mathbb{Y}_{\lambda}$ and $q_{\lambda}=1, \cdots , \dim \mathbb{U}_{\lambda}$.
The Schur transform $T$ sends the computational basis to the Schur basis, $\ket{ \vec{i} } \overset{T}{\to} \ket{ \lambda , p_{\lambda} , q_{\lambda} }$ where $\vec{i} \equiv i_{1} \cdots i_{n}$.
The labeling state $\ket{\lambda}$ might be omitted to keep the notation light.
We adopt the English notation for Young diagrams and tableaux. Let us further explain the definition of the Schur basis:
\begin{itemize}
\item $\lambda$ denotes a Young diagram, and $\lambda \vdash n$ means $\lambda$ having $n$ boxes. In the case of $\mathrm{U}(k)$, $\lambda$ is restricted to having at most $k$ rows, expressed by $\lambda \vdash_{k} n$. Denote by $h_{\lambda}(i,j)$ the hook length with respect to the box $(i,j)$.

\item The data $p_{\lambda}$ labels a standard Young tableau under $\lambda$ thus labels a basis vector for $\mathbb{Y}_{\lambda}$ i.e., $p_{\lambda}=1, \cdots , \dim \mathbb{Y}_{\lambda}$ where the dimension is given by the hook length formula
\begin{align}
\dim \mathbb{Y}_{\lambda}=\frac{n!}{ \prod_{(i,j) \in \lambda} h_{\lambda}(i,j) }.
\end{align}

\item The data $q_{\lambda}$ labels a semistandard Young tableau thus labels a basis vector for $\mathbb{U}_{\lambda}$, i.e., $q_{\lambda}=1, \cdots , \dim \mathbb{U}_{\lambda}$ where the dimension is given by
\begin{align}
\dim \mathbb{U}_{\lambda}=\prod_{(i,j) \in \lambda} \frac{k+j-i}{h_{\lambda}(i,j) }.
\label{eq:Dim-unitary-irrep}
\end{align}

\end{itemize}
The orthonormal basis for $\mathbb{Y}_{\lambda}$ and $\mathbb{U}_{\lambda}$ can be constructed systematically \cite{Goodman2009Symmetry,ceccherini2010RepsSym,Fulton2013RT}. In this paper, we take by default orthonormal bases, i.e. $\inner{p'_{\lambda'} , q'_{\lambda'}}{p_{\lambda} , q_{\lambda}}=\delta_{\lambda' \lambda} \delta_{p'_{\lambda}p_{\lambda}} \delta_{q'_{\lambda}q_{\lambda}}$.

\smallskip

According to Schur-Weyl duality, the tensor representation $U^{\otimes n}$ and permutation $\pi \in S_{n}$ are block-diagonal with respect to $\lambda$ under the Schur basis,
\begin{align}
U^{\otimes n}
& \cong \bigoplus_{\lambda \vdash n} \id_{\mathbb{Y}_{\lambda}} \otimes U_{\lambda}
\cong \bigoplus_{\lambda \vdash n} (\dim \mathbb{Y}_{\lambda})U_{\lambda},
\ \text{with} \
U^{\otimes n}=\bigoplus_{\lambda \vdash n} T^{-1} (\id_{\mathbb{Y}_{\lambda}} \otimes U_{\lambda}) T,
\\
\pi
& \cong \bigoplus_{\lambda \vdash n} \pi_{\lambda} \otimes \id_{\mathbb{U}_{\lambda}}
\cong \bigoplus_{\lambda \vdash n} (\dim \mathbb{U}_{\lambda})\pi_{\lambda},
\ \text{with} \
\pi=\bigoplus_{\lambda \vdash n} T^{-1} (\pi_{\lambda} \otimes \id_{\mathbb{U}_{\lambda}}) T.
\end{align}
Here, $\dim \mathbb{Y}_{\lambda}$ and $\dim \mathbb{U}_{\lambda}$ are also the respective multiplicities of $U_{\lambda}$ and $\pi_{\lambda}$.




\subsection{Dualization method: from $\bar{U} \otimes U$ symmetry to $U^{\otimes k}$}

The goal of this subsection is to convert the $\bar{U} \otimes U$ symmetry to $U^{\otimes k}$ symmetry to apply the Schur-Weyl and Schur transform to our situation.

The $\ket{\phi_{k}}$ is the maximal entangled state in the auxiliary $\mathbb{C}^{k} \otimes \mathbb{C}^{k}$, invariant under $\bar{U} \otimes U$. In order to apply Schur-Weyl duality, we convert the $\bar{U} \otimes U$ symmetry by the exterior product $\mathrm{Alt}^{k-1} \mathbb{C}^{k} \cong \mathbb{C}^{k}$. If $k=1$, we need to do nothing. When $k \geq 2$, Alice's basis can be equivalently described by below isomorphism with the dual basis $\{ \ket{i_{2} \cdots i_{k}} \}$ itself,
\begin{align}
\ket{i^{*}} \leftrightarrow \frac{1}{\sqrt{(k-1)!}} \sum_{i_{2} , \ldots , i_{k}=1}^{k} \epsilon_{i_{2} \ldots i_{k} i} \ket{i_{2} \cdots i_{k} }.
\label{eq:DualBasis}
\end{align}
One can show $\frac{1}{\sqrt{k!}} \sum_{i_{1} , \ldots , i_{k}=1}^{k} \epsilon_{i_{1} \ldots i_{k}} \ket{i_{1} \ldots i_{k}}$ indeed the stabiliser of $\frac{ U^{\otimes k} }{\det U}$. So we can build isomorphism between $\frac{1}{\sqrt{k!}} \sum_{i_{1} , \ldots , i_{k}=1}^{k} \epsilon_{i_{1} \ldots i_{k}} \ket{i_{1} \ldots i_{k}}$ and $\ket{\phi_{k}}$, then denote $\Pi_{k}$ the Young projector associated with the Young diagram $(1^{k})$, which is dual to the $1$-rank projector $\ketbra{\phi_{k}}{\phi_{k}}$,
\begin{align}
\Pi_{k}=\sum_{i,i'=1}^{k} \frac{\epsilon_{i_{1} \ldots i_{k}}\epsilon_{i'_{1} \ldots i'_{k}}}{k!} \ketbra{i_{1} \ldots i_{k}}{i'_{1} \ldots i'_{k}}
, \text{ satisfying }
\Pi_{k}=
U^{\otimes k} \Pi_{k} {U^{\dagger}}^{\otimes k} \ \forall U \in \mathrm{U}(k).
\end{align}

\smallskip We then move to the $N$-extension. Label the Alice's system by integers from $[1,k-1]$ and Bob's system by $[k,N+k-1]$, then
\begin{align}
&
\rho_{k,N}
=
\sum_{\vec{i} , \vec{j} \in [k]^{N+k-1} }\ketbra{ \vec{i} }{ \vec{j} }
\otimes
\rho_{\vec{i} , \vec{j}},
\ \text{where} \
\rho_{\vec{i} , \vec{j}}=
\sum_{i_{0},j_{0}=1}^{k}
\epsilon_{a_{2} \ldots a_{k} i_{0}} \epsilon_{'_{2} \ldots a'_{k} j_{0}}\rho_{i_{0} i_{1} \ldots i_{N} , j_{0} j_{1} \ldots j_{N}},
\label{eq:Blocks-Nextension-states-dual}
\\
&
X_{k,N}
=
\Pi_{k} \otimes \id_{k}^{\otimes (N-1)} \otimes X \otimes \id_{d}^{\otimes (N-1)}.
\label{eq:Extensions-X-KN}
\end{align}
where $(N+k-1)$-tuples $\vec{i} \equiv (a_{2} , \ldots , a_{k} , i_{1} , \ldots , i_{N})$, $\vec{j} \equiv (a'_{2} , \ldots , a'_{k} , j_{1} , \ldots , j_{N})$, and it is clear that the index-map between $\rho_{\vec{i} , \vec{j}} \in \mathbb{M}_{d^{N+1} \times d^{N+1}}(\mathbb{C})$ and $\rho_{i_{0} i_{1} \cdots i_{N} , j_{0} j_{1} \cdots j_{N}} \in \mathbb{M}_{d^{N+1} \times d^{N+1}}(\mathbb{C})$ is one-to-one.

This dualization method aims to apply Schur-Weyl duality and implement Schur transform in the following subsection.
Alternative approaches to addressing $\bar{U} \otimes U$ symmetry exist. For instance, implementing partial transpose on $\rho_{k}$'s Alice's system converts $\bar{U} \otimes U$ to $U \otimes U$ through $\Pi_{k} \otimes X \to \tau_{AB} \otimes X^{\tp}$. Alternatively, representation theory of Brauer algebra provides a framework to linear programming with $\bar{U}^{\otimes p} \otimes U^{\otimes q}$ symmetry \cite{Grinko:2022mpd}. In this paper, we adopt the dualization method as the primary economic strategy; analyses based on other methods are deferred to future studies.




\subsection{Twirling on feasible states}  \label{subsec:twirling-Uk}

Now we utilize the auxiliary unitary to implement the symmetry reduction. Having converted the symmetry from $\bar{U} \otimes U$ to $U^{\otimes k}$, we use Schur transform and twirling operation to block-diagonalize $\rho_{k,N}$ into $\lambda$-blocks as mentioned in Introduction. The auxiliary unitary on each $k$-extended $\mathcal{H}$ is $U \otimes \id_{d}$ with shorthand $U \equiv U \otimes \id_{d}$ if no confusion. The objective function is invariant under twirling due to the property $X=U^{\dagger} X U$,
\begin{align}
\tr ( X \rho )
=
\tr ( U^{\dagger} X U \rho )
=
\tr ( X U \rho U^{\dagger} )
=
\tr ( X \int U \rho U^{\dagger} dU ),
\end{align}
The goal of this subsection is to present the following theorem.
\begin{theorem}[Auxiliary unitary twirling] \label{thm:Twirling-PartialUnitary}
Unitary twirling on $\rho_{k,N}$'s auxiliary produces below twirled state (up to the isomorphism $\ket{ p_{\lambda} , q_{\lambda} }\cong \ket{q_{\lambda} , p_{\lambda}}$),
\begin{align}
\mathcal{T}_{U}[\rho_{k,N}]
&=
\int_{\mathrm{U}(k)}
U^{\otimes (N+k-1)}
\rho_{k,N}
{U^{\dagger}}^{\otimes (N+k-1)} dU
\cong
\bigoplus_{\lambda \vdash_{k} (N+k-1)} w_{\lambda} \frac{\id_{\mathbb{U}_{k, \lambda}}}{\dim \mathbb{U}_{k, \lambda}} \otimes \rho_{\lambda},
\label{eq:thm:Twirling-AuxiliaryUnitary}
\end{align}
where $\lambda \vdash_{k} (N+k-1)$ are Young diagrams having $N+k-1$ boxes and having at most $k$ rows. The nonnegative numbers $\{ w_{\lambda} \}_{\lambda \vdash_{k} (N+k-1)}$ satisfies $\sum_{\lambda \vdash_{k} (N+k-1)} w_{\lambda}=1$, and associates with a matrix $\rho_{\lambda} \in \mathbb{M}_{(\dim \mathbb{Y}_{\lambda} \times d^{N+1}) \times (\dim \mathbb{Y}_{\lambda} \times d^{N+1})}(\mathbb{C})_{+}$ with unit trace, with below form under Schur basis,
\begin{align}
\rho_{\lambda}=\sum_{p_{\lambda} , p'_{\lambda} } \ketbra{p_{\lambda}}{p'_{\lambda}} \otimes \rho_{p_{\lambda} , p'_{\lambda} },
\ \text{where} \
\rho_{p_{\lambda} , p'_{\lambda} }
\in \mathbb{M}_{d^{N+1} \times d^{N+1}}(\mathbb{C}).
\label{eq:YoungDiagramState}
\end{align}
On the other hand, the $X_{k,N}$ under the Schur basis can be correspondingly written into
\begin{align}
X_{k,N}
\cong
\bigoplus_{\lambda \vdash_{k} (N+k-1)}
\id_{\mathbb{U}_{\lambda}}
\otimes
\mathbb{P}_{\mathbb{Y}_{\lambda / (1^{k})}}
\otimes X_{(N)}
, \ \text{where} \
X_{(N)}=X \otimes \id_{d}^{\otimes (N-1)}.
\label{eq:YoungDiagramOperator}
\end{align}
This form is immediately obtained by the Littlewood-Richardson rule. The $\mathbb{P}_{\mathbb{Y}_{\lambda / (1^{k})}}$ is the projector of skew representation $\mathbb{Y}_{\lambda / (1^{k})}$ \cite{ceccherini2010RepsSym} embedding in $\mathbb{Y}_{\lambda}$ that amounts to selecting the standard Young tableaux whose $1$ to $k$ boxes are aligned in the first column.
\end{theorem}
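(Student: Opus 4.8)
The plan is to analyze the twirl $\mathcal{T}_U$ directly through the Schur transform on the $N+k-1$ auxiliary tensor factors, treating the $d^{N+1}$-dimensional Bob-tensor space as an inert spectator. First I would note that, after the dualization of Subsection 3.2, the twirling group acts as $U^{\otimes(N+k-1)}\otimes \id_{d^{N+1}}$ on $(\mathbb{C}^k)^{\otimes(N+k-1)}\otimes(\mathbb{C}^d)^{\otimes(N+1)}$, so it suffices to understand the twirl on the auxiliary factor alone and then tensor with the identity on Bob's space. Applying the Schur transform $T$ of Subsection 3.1, we get $(\mathbb{C}^k)^{\otimes(N+k-1)}\cong\bigoplus_{\lambda\vdash_k(N+k-1)}\mathbb{Y}_\lambda\otimes\mathbb{U}_{k,\lambda}$, and in this basis $U^{\otimes(N+k-1)}=\bigoplus_\lambda \id_{\mathbb{Y}_\lambda}\otimes U_\lambda$. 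By Schur's lemma, averaging $U_\lambda(\cdot)U_\lambda^\dagger$ over $\mathrm{U}(k)$ projects onto the commutant, replacing the $\mathbb{U}_{k,\lambda}$-part of any operator by $\tr_{\mathbb{U}_{k,\lambda}}(\cdot)\otimes \id_{\mathbb{U}_{k,\lambda}}/\dim\mathbb{U}_{k,\lambda}$. Carrying this out block by block on $\rho_{k,N}$ and tracking the surviving degrees of freedom (the $\mathbb{Y}_\lambda$-index $p_\lambda$ and the untouched Bob index) yields exactly the direct-sum form in Eq.~\eqref{eq:thm:Twirling-AuxiliaryUnitary}, with $w_\lambda$ the total weight $\rho_{k,N}$ places in the $\lambda$-block and $\rho_\lambda$ the renormalized compression; positivity and unit trace of each $\rho_\lambda$ follow because compression to a block plus partial trace preserves positivity, and the $w_\lambda$ are nonnegative and sum to one since they are the diagonal block weights of a unit-trace positive operator. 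The isomorphism $\ket{p_\lambda,q_\lambda}\cong\ket{q_\lambda,p_\lambda}$ is just the choice to list the multiplicity space before the irrep space.

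For the statement about $X_{k,N}$ in Eq.~\eqref{eq:YoungDiagramOperator}, the key observation is that $X_{k,N}=\Pi_k\otimes\id_k^{\otimes(N-1)}\otimes X\otimes\id_d^{\otimes(N-1)}$, so on the auxiliary factor it acts as $\Pi_k\otimes\id_k^{\otimes(N-1)}$ where $\Pi_k$ is the Young projector onto the $(1^k)$-column (the single-column antisymmetric irrep) in the first $k$ auxiliary slots, and trivially elsewhere. What I would do here is invoke the branching/restriction picture: the image of $\Pi_k\otimes\id_k^{\otimes(N-1)}$ inside $(\mathbb{C}^k)^{\otimes(N+k-1)}$ is the subspace obtained by inducing the sign representation of $S_k$ up, which after Schur-Weyl decomposition picks out, inside each $\mathbb{Y}_\lambda$, precisely the copies of the skew module $\mathbb{Y}_{\lambda/(1^k)}$ — this is the Littlewood–Richardson rule in the special case where one factor is a single column. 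Concretely, a standard Young tableau of shape $\lambda$ that contributes has its entries $1,\dots,k$ occupying the first column; the set of such tableaux is in bijection with standard tableaux of the skew shape $\lambda/(1^k)$, and this is exactly the projector $\mathbb{P}_{\mathbb{Y}_{\lambda/(1^k)}}$ described in the theorem. Since $X_{k,N}$ commutes with the auxiliary $\mathrm{U}(k)$-action, it is block diagonal in $\lambda$ and acts as identity on the $\mathbb{U}_{k,\lambda}$-factor; the remaining $X\otimes\id_d^{\otimes(N-1)}$ is untouched and sits on Bob's space, giving the claimed $\id_{\mathbb{U}_\lambda}\otimes\mathbb{P}_{\mathbb{Y}_{\lambda/(1^k)}}\otimes X_{(N)}$.

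I expect the main obstacle to be the careful identification in the second part: namely, pinning down that $\Pi_k$ acting on the first $k$ of $N+k-1$ slots, after the global Schur transform on all $N+k-1$ slots, becomes exactly the skew-module projector $\mathbb{P}_{\mathbb{Y}_{\lambda/(1^k)}}$ embedded in $\mathbb{Y}_\lambda$, with the right normalization and the right basis convention (English notation, first column filled with $1,\dots,k$). This requires being precise about the compatibility between (i) the Schur–Weyl decomposition of $(\mathbb{C}^k)^{\otimes k}\otimes(\mathbb{C}^k)^{\otimes(N-1)}$ and (ii) the decomposition of the restriction $\mathbb{Y}_\lambda\!\downarrow^{S_{N+k-1}}_{S_k\times S_{N-1}}$, and then matching the $S_k$-sign-isotypic part with the skew Specht module $\mathbb{Y}_{\lambda/(1^k)}$ as an $S_{N-1}$-representation via the standard skew-tableaux model. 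The first part (the twirl itself) is a routine application of Schur's lemma once the spectator Bob space is isolated, so the burden of the proof is really the representation-theoretic bookkeeping around $\Pi_k$ and the Littlewood–Richardson specialization.
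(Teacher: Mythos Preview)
Your proposal is correct and follows essentially the same approach as the paper: the paper carries out the twirl by writing $\rho_{k,N}$ in the Schur basis and applying the Peter--Weyl orthogonality relation $\int U^{\lambda}_{ab}\bar{U}^{\lambda'}_{a'b'}\,dU=\delta_{\lambda\lambda'}\delta_{aa'}\delta_{bb'}/\dim\mathbb{U}_{k,\lambda}$, which is exactly your Schur's-lemma commutant projection spelled out in coordinates. For the $X_{k,N}$ part the paper simply asserts the Littlewood--Richardson identification without further argument, so your more detailed branching discussion is, if anything, more thorough than what the paper supplies.
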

\begin{proof}
Denote $(N+k-1)$-tuples by $\vec{i} \equiv (a_{2} , \cdots , a_{k} , i_{1} , \cdots , i_{N})$. By adding matrix indices, the Schur transform $T$ is expressed in terms of $T_{\lambda q_{\lambda} p_{\lambda} ,\vec{i}}$ as below,
\begin{align}
T
=
\sum_{\lambda \vdash_{k} (N+1)} \sum_{q_{\lambda}=1}^{\dim \mathbb{U}_{\lambda}} \sum_{p_{\mu}=1}^{\dim \mathbb{Y}_{\lambda}}
\sum_{\vec{i} \in [k]^{N+1} }
T_{\lambda p_{\lambda} q_{\lambda} , \vec{i} } \ketbra{ \lambda , p_{\lambda} , q_{\lambda} }{\vec{i}}.
\end{align}
Likewise, add matrix indices into $U^{\otimes (N+k-1)}$ and express it as $(U^{\otimes (N+k-1)})_{\vec{i} , \vec{j}}$. Then we consider the unitary conjugation on the extended state $\rho_{k,N}$,
\begin{align}
&U^{\otimes (N+k-1)} \rho_{k,N} {U^{\dagger}}^{\otimes (N+k-1)}
\nonumber \\
&=
\sum_{\lambda , \lambda' \vdash_{k} (N+k-1)} \sum_{\text{repeat $p,q$-indices}}
\ketbra{\lambda p_{\lambda} q_{\lambda}}{\lambda' p'_{\lambda'} q'_{\lambda'}}
\otimes
U^{\lambda}_{q_{\lambda} \tilde{q}_{\lambda}}
\rho_{\lambda p_{\lambda} \tilde{q}_{\lambda} , \lambda' p'_{\lambda'} \tilde{q}'_{\lambda'} }
\bar{U}^{\lambda'}_{q'_{\lambda'} \tilde{q}'_{\lambda'}},
\label{eq:SchurTwirling-kblock}
\end{align}
where the block matrix $\rho_{\lambda p_{\lambda} \tilde{q}_{\lambda}, \lambda' p'_{\lambda'} \tilde{q}'_{\lambda'} } \in \mathbb{M}_{d^{N+1} \times d^{N+1}}(\mathbb{C})$ is defined by
\begin{align}
\rho_{\lambda p_{\lambda} \tilde{q}_{\lambda}, \lambda' p'_{\lambda'} \tilde{q}'_{\lambda'} }
=
\sum_{\vec{i} , \vec{j}}
T_{\lambda p_{\lambda} \tilde{q}_{\lambda} , \vec{i} } \rho_{\vec{i}, \vec{j}}
T^{-1}_{\vec{j} , \lambda' p'_{\lambda'} \tilde{q}'_{\lambda'} }.
\end{align}
The unitary twirling equalises the pairs $(\lambda, \lambda')$ and $(\tilde{q}_{\lambda}, \tilde{q}'_{\lambda'})$ by Peter-Weyl theorem
\begin{align}
\int_{\mathrm{U}(k)} dU (U_{\lambda} )_{ab} ( \bar{U}_{\lambda'} )_{a'b'}=\frac{1}{\dim{U_{k, \lambda}}} \delta_{\lambda, \lambda'} \delta_{aa'} \delta_{bb'},
\end{align}
leading to the diagonal-block form
\begin{align}
\mathcal{T}_{U}[\rho_{k,N}]
&=
\int_{\mathrm{U}(k)}
U^{\otimes (N+k-1)}
\rho_{k,N}
{U^{\dagger}}^{\otimes (N+k-1)} dU
\\
&=
\sum_{\lambda \vdash_{k} (N+k-1)} \sum_{q_{\lambda} , q'_{\lambda} } \sum_{p_{\lambda} , p'_{\lambda} }
\frac{1}{\dim{U_{k, \lambda}}} \ketbra{\lambda p_{\lambda} q_{\lambda}}{\lambda p'_{\lambda} q_{\lambda}}
\otimes
\rho_{\lambda p_{\lambda} q'_{\lambda} , \lambda p'_{\lambda} q'_{\lambda} }.
\label{eq:Twirling-AuxiliaryUnitary-Blocks}
\end{align}
Note that $\id_{\mathbb{U}_{k, \lambda}}=\sum_{q_{\lambda}} \ketbra{q_{\lambda}}{q_{\lambda}}$ and $\rho_{p_{\lambda} , p'_{\lambda} }=\sum_{q'_{\lambda}} \rho_{\lambda p_{\lambda} q'_{\lambda} , \lambda p'_{\lambda} q'_{\lambda} }$, then by permuting the order of convention $\ket{\lambda p_{\lambda} q_{\lambda}} \cong \ket{\lambda q_{\lambda} p_{\lambda}}$, we could write the form of each $\lambda$-block,
\begin{align}
\frac{\id_{\mathbb{U}_{k, \lambda}}}{\dim{U_{k, \lambda}}} \otimes \rho_{\lambda}
\cong
\int_{\mathrm{U}(k)} {U}_{\lambda} (T \rho_{k,N} T^{-1})_{\lambda} U^{\dagger}_{\lambda} dU,
\ \text{where }
\rho_{\lambda}=\sum_{p_{\lambda} , p'_{\lambda} } \ketbra{p_{\lambda}}{p'_{\lambda}} \otimes \rho_{p_{\lambda} , p'_{\lambda} }.
\end{align}
The $w_{\lambda}$ are then defined as nonnegative numbers since every $\rho_{\lambda} \geq 0$.
\end{proof}

Recalling Eq.\eqref{eq:Blocks-Nextension-states-dual}, the presence of $\epsilon_{a_{2} \cdots a_{k} i_{0}}$ fixes the first $k-1$ data $a_{2} \cdots a_{k}$ (Alice's data) into Young diagram $(1^{k-1})$. Following Littlewood-Richardson rule, at level $N=1$ the decomposition to $\mathrm{Alt}^{k-1} \mathbb{C}^{k} \otimes \mathbb{C}^{k}$ produces standard Young tableaux $a_{(1^{k})}$ and $s_{(2,1^{k-2})}$ as below,
\begin{align}
\ytableausetup{mathmode, boxsize=normal}
\epsilon_{a_{2} \cdots a_{k} i_{0}} \to
  \begin{ytableau}
   {\scriptstyle 1} \\
   {\scriptstyle 2} \\
   {\scriptstyle \cdots} \\
   {\scriptstyle k-1}
  \end{ytableau}
  , \qquad
    \begin{ytableau}
   {\scriptstyle 1} \\
   {\scriptstyle 2} \\
   {\scriptstyle \cdots} \\
   {\scriptstyle k-1}
  \end{ytableau} \otimes
    \begin{ytableau}
   {\scriptstyle k}
  \end{ytableau}
  =
\underbrace{
  \begin{ytableau}
   {\scriptstyle 1} \\
   {\scriptstyle 2} \\
   {\scriptstyle \cdots} \\
   {\scriptstyle k-1} \\
   {\scriptstyle k}
  \end{ytableau}
  }_{\equiv a_{(1^{k})}}
  \oplus
  \underbrace{
  \begin{ytableau}
   {\scriptstyle 1} & {\scriptstyle k} \\
   {\scriptstyle 2} \\
   {\scriptstyle \cdots} \\
   {\scriptstyle k-1}
  \end{ytableau}
  }_{\equiv s_{(2,1^{k-2})}}.
\label{eq:Coupling-N=1}
\end{align}
\begin{corollary}
At level $N=1$, twirled state $\mathcal{T}_{U}[\rho_{k,1}]$ is decomposed into the blocks associated with Young diagrams $(1^{k})$ and $(2,1^{k-2})$,
\begin{align}
\mathcal{T}_{U}[\rho_{k,1}]
&=
w_{(1^{k})}
\id_{\mathbb{U}_{k, (1^{k})}} \otimes
\ketbra{a_{(1^{k})}}{a_{(1^{k})}} \otimes \rho_{a_{(1^{k})},a_{(1^{k})}}
\nonumber \\
&+
w_{(2,1^{k-2})}
\frac{\id_{\mathbb{U}_{k, (2,1^{k-2})}}}{k^2-1}
\ketbra{s_{(2,1^{k-2})}}{s_{(2,1^{k-2})}} \otimes \rho_{s_{(2,1^{k-2})},s_{(2,1^{k-2})}},
\label{eq:TwirledStates-N=1}
\end{align}
where $w_{(1^{k})} \geq 0$ and $w_{(2,1^{k-2})} \geq 0$ satisfy probability constraint $w_{(1^{k})}+w_{(2,1^{k-2})}=1$.
\end{corollary}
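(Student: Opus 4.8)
The plan is to specialize Theorem~\ref{thm:Twirling-PartialUnitary} to the case $N = 1$, so the proof is essentially an unwinding of the general block decomposition together with the explicit branching rule~\eqref{eq:Coupling-N=1}. First I would invoke Theorem~\ref{thm:Twirling-PartialUnitary} at $N=1$: the twirled state $\mathcal{T}_U[\rho_{k,1}]$ decomposes as $\bigoplus_{\lambda \vdash_k k} w_\lambda \frac{\id_{\mathbb{U}_{k,\lambda}}}{\dim \mathbb{U}_{k,\lambda}} \otimes \rho_\lambda$ with $\sum_\lambda w_\lambda = 1$ and each $\rho_\lambda \geq 0$. The only task is to enumerate which Young diagrams $\lambda$ actually occur and to compute the relevant dimensions.

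Next I would use the structure coming from dualization in Eq.\eqref{eq:Blocks-Nextension-states-dual}: the antisymmetrizing tensor $\epsilon_{a_2 \cdots a_k i_0}$ forces Alice's $k-1$ indices into the single-column shape $(1^{k-1})$, so at level $N=1$ the only representations that can appear in $\mathrm{Alt}^{k-1}\mathbb{C}^k \otimes \mathbb{C}^k \cong (\mathbb{C}^k)^{\otimes k}$ projected onto the image of $\Pi_k \otimes \id$ are those obtained by adding a single box to the column $(1^{k-1})$. By the Littlewood-Richardson (Pieri) rule this yields exactly $\lambda = (1^k)$ and $\lambda = (2,1^{k-2})$, as displayed in~\eqref{eq:Coupling-N=1}; both have exactly $k$ rows when $k \geq 2$, consistent with $\lambda \vdash_k k$. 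For $\lambda = (1^k)$ one has $\dim \mathbb{Y}_{(1^k)} = 1$, so there is a single standard Young tableau $a_{(1^k)}$ and the $\mathbb{Y}$-factor collapses to the rank-one projector $\ketbra{a_{(1^k)}}{a_{(1^k)}}$, while $\dim \mathbb{U}_{k,(1^k)} = 1$ as well (it is the determinant representation), so the denominator is $1$. For $\lambda = (2,1^{k-2})$, again $\dim \mathbb{Y}_{(2,1^{k-2})} = 1$ in the relevant branching: after removing the first column $(1^{k-1})$ from the shape, the skew shape $\lambda/(1^{k-1})$ is a single box, so there is a unique standard filling $s_{(2,1^{k-2})}$; and $\dim \mathbb{U}_{k,(2,1^{k-2})} = k^2 - 1$ by the hook-content formula~\eqref{eq:Dim-unitary-irrep} (this is the adjoint representation of $\mathrm{U}(k)$), giving the denominator $k^2 - 1$ in the second summand.

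Assembling these pieces into~\eqref{eq:thm:Twirling-AuxiliaryUnitary} produces precisely the two-term expression~\eqref{eq:TwirledStates-N=1}, with $\rho_{a_{(1^k)},a_{(1^k)}}$ and $\rho_{s_{(2,1^{k-2})},s_{(2,1^{k-2})}}$ being the surviving $d^2 \times d^2$ blocks (here $d^{N+1} = d^2$), each normalized to unit trace, and the weights $w_{(1^k)}, w_{(2,1^{k-2})} \geq 0$ summing to $1$ by the general statement of Theorem~\ref{thm:Twirling-PartialUnitary}. I do not expect any genuine obstacle: the content is entirely bookkeeping. The one place to be slightly careful is verifying that no other diagram contributes and that the $\mathbb{Y}$-multiplicities really are $1$ in each case --- i.e., that the constraint imposed by $\Pi_k$ (equivalently by the $\epsilon$ tensor on Alice's side) genuinely restricts the tensor decomposition to the Pieri rule for adding one box to $(1^{k-1})$ rather than the full Schur-Weyl decomposition of $(\mathbb{C}^k)^{\otimes k}$ --- but this is exactly the statement recorded in the paragraph preceding the corollary and in~\eqref{eq:Coupling-N=1}, so it may simply be cited.
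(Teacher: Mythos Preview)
Your proposal is correct and follows exactly the approach implicit in the paper: the corollary is stated immediately after Eq.~\eqref{eq:Coupling-N=1} as a direct specialization of Theorem~\ref{thm:Twirling-PartialUnitary} to $N=1$, with the Pieri rule identifying the two surviving diagrams and the single compatible tableau in each. One small slip: you write that both diagrams ``have exactly $k$ rows,'' but $(2,1^{k-2})$ has $k-1$ rows; this is harmless since Theorem~\ref{thm:Twirling-PartialUnitary} uses $\lambda \vdash_k k$ in the ``at most $k$ rows'' sense, and the argument goes through unchanged.
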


We could illustrate $\mathcal{T}_{U}[\rho_{k,N}]$'s and $X_{k,N}$'s decompositions at $N>1$ level under Schur basis as
\begin{eqnarray}
\mathcal{T}_{U}[\rho_{k,N}]=
    \begin{tikzpicture}[baseline={-0.5ex},mymatrixenv]
        \matrix [mymatrix,inner sep=4pt] (m)  
        {
    \tikzmarkin[kwad=style purple]{lambda} \rho_{\lambda} \tikzmarkend{lambda} & & & \\
     & \tikzmarkin[kwad=style purple]{lambdap} \rho_{\lambda'} \tikzmarkend{lambdap} &  \\
    & & \tikzmarkin[kwad=style purple]{lambdapp} \rho_{\lambda''} \tikzmarkend{lambdapp} & \\
    & & & \ddots
    \\
    };
    \end{tikzpicture}
    , \qquad
    X_{k,N}=
    \begin{tikzpicture}[baseline={-1ex},mymatrixenv]
        \matrix [mymatrix,inner sep=4pt] (p)  
        {
    \tikzmarkin[kwad=style purple]{lambdak} X_{\lambda} & & & \\
     & \ddots & & \\
    & & X_{\lambda''} \tikzmarkend{lambdak} & \\
    & & & \\ 
    };
        \mymatrixbracetop{1}{3}{$\{ \lambda | \ell(\lambda)=k\}$}
    \end{tikzpicture}.
    \label{eq:DiagramBlocks}
\end{eqnarray}
Call each block in $\mathcal{T}_{U}[\rho_{k,N}]$ a diagram-block, or $\lambda$-block if Young diagram $\lambda$ is specified. Since $X_{k,N}$ is also block-diagonal with respect to diagrams, denote $X_{\lambda}$ the corresponding $\lambda$-block in $X_{k,N}$. The $\mathbb{P}_{\mathbb{Y}_{\lambda / (1^{k})}}$ in Eq.\eqref{eq:YoungDiagramOperator} implies that $X_{k,N}$ is only support in the diagrams that have rows equal to $k$, that is, $\ell(\lambda)=k$.

To sum up, we have shown that the twirled state $\mathcal{T}_{U}[\rho_{k,N}]$ on which SDP reduction is based, is block diagonal with diagram-blocks. A diagram block itself consists of block matrices labeled by joint indices $(p_{\lambda},p'_{\lambda})$, say, $\rho_{p_{\lambda} , p'_{\lambda} }$ and $X_{p_{\lambda} , p'_{\lambda} }$, which will be called tableau-labeled matrices.
The $\mathbb{P}_{\mathbb{Y}_{\lambda / (1^{k})}}$ in Eq.\eqref{eq:YoungDiagramOperator} also implies that $X_{p_{\lambda} , p'_{\lambda} }$ vanishes unless both $p_{\lambda}$ and $p'_{\lambda}$ correspond to standard Young tableaux whose first column is filled with $1$ to $k$. The next section will take a closer look at tableau-labeled matrices.




\section{Permutational Symmetry} \label{sec:PermutationalSym}

This section aims to have a closer look at the internal structure of diagram-blocks. Each diagram block is closed under permutation operation $S_{N+k-1}$, thus accessible to make $S_{N}$ Bose symmetry as SDP constraints. We begin with the following theorem and then explain it in the following subsections.
\begin{theorem}[SDP with BSE constraint] \label{thm:KBPSDP-SR-BSE}
The $\mathsf{SDP}_{k,N}(X)=\mathsf{SDP}_{k,N}^{\sym}(X)$ where $\mathsf{SDP}_{k,N}^{\sym}(X)$ is $\mathsf{SDP}_{k,N}(X)$'s reduction defined as follows,
\begin{align}
\mathsf{SDP}_{k,N}^{\sym}(X)
&:=
\min_{\{\rho_{\lambda} \in \mathrm{Pos}(\mathbb{C}^{d_{\lambda}} \otimes (\mathbb{C}^{d})^{\otimes (N+1)}), \lambda \vdash_{k} (N+k-1)\}} 
\tr [ ( \mathbb{P}_{\mathbb{Y}_{\lambda / (1^{k})}} \otimes X_{(N)} ) \rho_{\lambda} ], \\
\text{subject to } \:
&
\Delta_{\lambda}(\tau) \rho_{\lambda} = \rho_{\lambda}
, \
\forall \tau \in \mr{Cox}_{N}
, \ \text{and }
\tr\rho_{\lambda}=1.
\nonumber
\end{align}
The notations here are:
\begin{enumerate}
\item
For a given Young diagram $\lambda=(\lambda_{1} , \ldots , \lambda_{k})$, denote $\mathrm{SYT}_{\lambda / (1^{k})}$ and $\mathrm{SYT}_{\lambda / (2,1^{k-2})}$ the sets of standard Young tableaux based on skew shape $\lambda / (1^{k})$ and $\lambda / (2,1^{k-2})$, respectively. These standard Young tableaux of $\mathrm{SYT}_{\lambda / (1^{k})}$ (respect $\mathrm{SYT}_{\lambda / (2,1^{k-2})}$) span the subspace $\mathbb{Y}_{\lambda / (1^{k})}$ (respect $\mathbb{Y}_{\lambda / (2,1^{k-2})}$) of $\mathbb{Y}_{\lambda}$. Denote $\mathbb{P}_{\mathbb{Y}_{\lambda / (1^{k})}}$ and $\id_{\mathbb{Y}_{\lambda / (2,1^{k-2})}}$ the respect projectors of the subspaces.

\item
Denote $\mathrm{Pos}(V)$ the set of positive definite matrices with respect to vector space $V$.
The size of $\rho_{\lambda}$ is $d^{N+1} \times O(k^{N+1} (N-1)^{-\frac{k^2+k-2}{4}})$ in Big O notation, where the block size, defined as the ratio $\mathrm{size}(rho_{\lambda})/d^{N+1}$ with $d^{N+1}$ the size of tableau-labeled matrices, is $d_{\lambda}=\dim \mathbb{Y}_{\lambda / (1^{k})}+f^{\lambda / (2,1^{k-1})}$ Eq.\eqref{eq:formulas-DiagramBlock}. We have $d_{\lambda} \sim O(k^{N} (N-1)^{-\frac{k^2+k-2}{4}})$.

\item
$\mr{Cox}_{N}=\{ (j,j+1) \in S_{N} : k \leq j \leq N+k-2 \}$ is the set of Coxeter generators of $S_{N}$;

\item
$\Delta_{\lambda} : S_N \to \mathrm{End}(\mathbb{C}^{d_{\lambda}}
 \otimes (\mathbb{C}^{d})^{\otimes (N+1)})$ is induced from
$\Delta_{B} : S_N \to \mathrm{U}((\mathbb{C}^{k}
 \otimes \mathbb{C}^{d})^{\otimes N})$ with $\id_{A} \otimes \Delta_{B}(\pi) \mapsto U_{\pi}^{\lambda} \otimes U_{\pi}$;
 
\item There are at most $(N-1) d_{\lambda}^2 \times d^{N+1}$ many of constraints.
\end{enumerate}
\end{theorem}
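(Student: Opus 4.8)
The plan is to build on Theorem~\ref{thm:Twirling-PartialUnitary}, which already block-diagonalizes the twirled state $\mathcal{T}_{U}[\rho_{k,N}]$ into the $\lambda$-blocks $\frac{\id_{\mathbb{U}_{k,\lambda}}}{\dim\mathbb{U}_{k,\lambda}}\otimes\rho_{\lambda}$ and, correspondingly, $X_{k,N}$ into $\id_{\mathbb{U}_{\lambda}}\otimes\mathbb{P}_{\mathbb{Y}_{\lambda/(1^{k})}}\otimes X_{(N)}$. The first step is to observe that, since the objective function is $\mathrm{U}(k)$-invariant, we lose nothing by restricting $\mathsf{SDP}_{k,N}(X)$ to twirled feasible points; hence $\mathsf{SDP}_{k,N}(X)=\min\sum_{\lambda}w_{\lambda}\tr[(\mathbb{P}_{\mathbb{Y}_{\lambda/(1^{k})}}\otimes X_{(N)})\rho_{\lambda}]$ over normalized $\rho_{\lambda}\geq 0$ and weights $w_{\lambda}\geq 0$ summing to one. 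The next step is to push the weights into the matrices: absorb $w_{\lambda}$ into $\rho_{\lambda}$ (so $\tr\rho_{\lambda}=w_{\lambda}$), and note that because the blocks are completely decoupled in both objective and (as I argue next) constraints, the outer minimization over the simplex of $w_{\lambda}$'s can be performed independently block by block — i.e. one may as well put all the weight on the single block achieving the minimum of $\tr[(\mathbb{P}_{\mathbb{Y}_{\lambda/(1^{k})}}\otimes X_{(N)})\rho_{\lambda}]$ subject to $\tr\rho_{\lambda}=1$. This is exactly the form in the statement, provided (a) the permutation constraint is correctly captured by invariance under $\mr{Cox}_{N}$ only, and (b) the ambient space of each $\rho_{\lambda}$ collapses from $\mathbb{C}^{\dim\mathbb{Y}_{\lambda}}\otimes(\mathbb{C}^{d})^{\otimes(N+1)}$ to $\mathbb{C}^{d_{\lambda}}\otimes(\mathbb{C}^{d})^{\otimes(N+1)}$.

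For (a): the original feasibility requires $\rho_{k,N}$ to be invariant under $\id_{A}\otimes\Delta_{B}(\pi)$ for all $\pi\in S_{N}$. Since $S_{N}$ is generated by its adjacent transpositions $(j,j+1)$ for $k\leq j\leq N+k-2$ (in the labeling where Bob's $N$ systems carry indices $[k,N+k-1]$), invariance under all of $S_{N}$ is equivalent to invariance under $\mr{Cox}_{N}$. I would then check that the $\mathrm{U}(k)$-twirling commutes with $\Delta_{B}$ restricted to Bob's systems — both act on disjoint tensor legs in the right way (unitary legs vs. combined legs), so the twirled state is still $S_{N}$-invariant, and under the Schur decomposition the action of $(j,j+1)\in\mr{Cox}_{N}$ on the $\lambda$-block is precisely $U_{\pi}^{\lambda}\otimes U_{\pi}$ — the restricted $S_{N}$-representation $U^{\lambda}_{\pi}$ on $\mathbb{Y}_{\lambda}$ tensored with the natural permutation $U_{\pi}$ on $(\mathbb{C}^{d})^{\otimes N}$ (the extra $\mathbb{C}^d$ factor carrying Alice's original system is untouched). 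This defines $\Delta_{\lambda}$ and matches item~4; the count in item~5 is then immediate, since each of the $N-1$ generators gives a linear matrix equation on a space of dimension $d_{\lambda}^{2}\cdot d^{N+1}$ (after accounting for Hermiticity the bound $(N-1)d_{\lambda}^{2}d^{N+1}$ is a valid upper estimate).

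For (b), the size reduction: I would invoke the Littlewood–Richardson computation already recorded after Theorem~\ref{thm:Twirling-PartialUnitary} and in Eq.~\eqref{eq:Coupling-N=1}. The block $X_{\lambda}$ carries the projector $\mathbb{P}_{\mathbb{Y}_{\lambda/(1^{k})}}$, so only the $p_{\lambda}$-coordinates indexing standard tableaux with first column $1,\dots,k$ contribute to the objective. However, the permutation constraints $\Delta_{\lambda}(\tau)\rho_{\lambda}=\rho_{\lambda}$ mix these coordinates with the neighboring ones — precisely the tableaux of $\mathrm{SYT}_{\lambda/(2,1^{k-2})}$ — because acting by $(k,k+1)$ on a column-tableau of $\mathbb{Y}_{\lambda/(1^{k})}$ can move the entry $k$ out of the first column. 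So the minimal $\Delta_{\lambda}$-invariant subspace of $\mathbb{Y}_{\lambda}$ containing $\mathbb{Y}_{\lambda/(1^{k})}$ is $\mathbb{Y}_{\lambda/(1^{k})}\oplus\mathbb{Y}_{\lambda/(2,1^{k-2})}$, of dimension $d_{\lambda}=\dim\mathbb{Y}_{\lambda/(1^{k})}+f^{\lambda/(2,1^{k-1})}$; restricting $\rho_{\lambda}$ to this subspace is without loss, since the orthogonal complement decouples and can be set to zero. (I would double-check the apparent notational clash $(2,1^{k-2})$ vs.\ $(2,1^{k-1})$ in the displayed formula for $d_\lambda$ against Eq.~\eqref{eq:formulas-DiagramBlock}; I expect $f^{\lambda/(2,1^{k-1})}$ there is shorthand for $\dim\mathbb{Y}_{\lambda/(2,1^{k-2})}$.) The asymptotic bound $d_{\lambda}=O(k^{N}(N-1)^{-(k^{2}+k-2)/4})$ is then a separate estimate deferred to Subsection~\ref{subsec:Asymptotic Diagram-blockSize}, which I would cite rather than reprove here.

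The main obstacle, I expect, is \emph{(b)} — showing precisely that the permutation-invariant completion of the support of $X_{\lambda}$ is exactly $\mathbb{Y}_{\lambda/(1^{k})}\oplus\mathbb{Y}_{\lambda/(2,1^{k-2})}$ and nothing larger. This requires a careful branching/Young–orthogonal-form argument: one must verify that the Coxeter generators $\mr{Cox}_{N}$ (which exclude the transposition $(k-1,k)$ straddling the Alice/Bob cut) genuinely keep the $S_N$-action within these two skew shapes, i.e. that $\mathbb{Y}_{\lambda/(1^{k})}\oplus\mathbb{Y}_{\lambda/(2,1^{k-2})}$ is $S_N$-stable as a subspace of the $S_{N+k-1}$-module $\mathbb{Y}_{\lambda}$ but $\mathbb{Y}_{\lambda/(1^{k})}$ alone is not. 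This is where the skew-representation formalism of \cite{ceccherini2010RepsSym} and an explicit look at the matrix elements of adjacent transpositions in Young's orthogonal basis (as in the worked $k=2$ examples) do the real work.
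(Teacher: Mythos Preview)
Your proposal is correct and follows essentially the same route as the paper. The paper's argument in Section~\ref{subsec:symmetric-reduction-permutation} is organized around the complementary observation that the $S_N$-action (which fixes the labels $1,\dots,k-1$) cannot send an $m$-tableau to an $a$- or $s$-tableau, so the $\{m_\lambda\}$-block of $\pi_\lambda$ decouples and the corresponding rows and columns of $\rho_\lambda$ may be zeroed out---equivalent to your invariant-subspace framing, and not actually requiring minimality of $\mathbb{Y}_{\lambda/(1^k)}\oplus\mathbb{Y}_{\lambda/(2,1^{k-2})}$ for the equality $\mathsf{SDP}_{k,N}=\mathsf{SDP}_{k,N}^{\sym}$ to hold.
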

Points 1 and 2 will be explained in Subsection~\ref{subsec:Asymptotic Diagram-blockSize}, and Points 3-5 in Subsection~\ref{subsec:symmetric-reduction-permutation}.




\subsection{Permutation constraints arisen from $N$-BSE symmetry} \label{subsec:symmetric-reduction-permutation}

A permutation $\pi \in S_{N}$ only acts on $\rho_{\lambda}$ and the left-action is given by $\Delta_{\lambda}: S_{N} \to \mathbb{Y}_{\lambda} \otimes \mathrm{U}( (\mathbb{C}^{d})^{\otimes N})$,
\begin{align}
\Delta_{\lambda}(\pi) \rho_{\lambda}
=
\sum_{p_{\lambda} , p'_{\lambda} }
\pi \ketbra{p_{\lambda}}{p'_{\lambda}} \otimes \pi \rho_{p_{\lambda} , p'_{\lambda}}
=
\sum_{p_{\lambda} , p'_{\lambda} , p''_{\lambda}}
\ketbra{p''_{\lambda}}{p'_{\lambda}} \otimes \pi_{p''_{\lambda} p_{\lambda}} (\pi \rho_{p_{\lambda} , p'_{\lambda}} ),
\label{eq:PermutatingYoungDiagramState}
\end{align}
where $\rho_{p_{\lambda} p'_{\lambda}} \in \mathbb{M}_{d^{N+1} \times d^{N+1}}(\mathbb{C})$ denotes a tableau-labeled matrix labeled by pair $(p_{\lambda},p'_{\lambda})$, and $\pi_{p''_{\lambda} p_{\lambda}}$ is the irreducible representation matrix of $\pi$ associative with $\mathbb{Y}_{\lambda}$ decomposing auxiliary space, meanwhile $\pi$'s action on $(\mathbb{C}^{d})^{\otimes N}$ is defined in the natural way as $\pi \ket{ e_{1} \ldots e_{N} }=\ket{e_{\pi(1)} \ldots e_{\pi(N)}}$.

Above equation implies below constraints with respect to permutation invariance,
\begin{align}
\pi^{-1} \rho_{p_{\lambda} , p'_{\lambda}}
=
\sum_{p''_{\lambda}}
\pi_{p_{\lambda} p''_{\lambda}} \rho_{p''_{\lambda} , p'_{\lambda}}.
\label{eq:PermutationInvariance-Constraints}
\end{align}
This equation includes the situation that $\Delta_{B}(\pi)$ acts from $\rho_{\lambda}$'s right side due to $\rho_{p_{\lambda},p'_{\lambda}}^{\dagger}=\rho_{p'_{\lambda},p_{\lambda}}$.

Let us introduce some notations: for Young diagram $\lambda \vdash_{k} (N+k-1)$ having $k$ rows, we can classify its standard Young tableaux into
\begin{align}
&
\mr{SYT}_{\lambda}^{a}
:=
\mr{SYT}_{\lambda / (1^{k}) }
=
\{ a_{\lambda} \in \mr{SYT}_{\lambda} : a_{\lambda}(i,1)=i , 1 \leq i \leq k \},
\\
&
\mr{SYT}_{\lambda}^{s}
:=
\mr{SYT}_{\lambda / (2,1^{k-2})}
=
\{ s_{\lambda} \in \mr{SYT}_{\lambda} : s_{\lambda}(i,1)=i , 1 \leq i \leq k-1 , \text{ and } w(1,2)=k \},
\\
&
\mr{SYT}_{\lambda}^{m} = \mr{SYT}_{\lambda} \setminus (\mr{SYT}_{\lambda}^{a} \sqcup \mr{SYT}_{\lambda}^{s}),
\end{align}
where $(i,j)$ coordinates the box in $\lambda$ at the $i$th row and the $j$th column. The illustration below is for the $\mr{SYT}_{(6,4,3,2,1)}^{a}$, $\mr{SYT}_{(6,4,3,2,1)}^{s}$, and $\mr{SYT}_{(6,4,3,2,1)}^{m}$ (from left to right),
\begin{align}
\ytableausetup{mathmode, boxsize=normal}
  \begin{ytableau}
   {\scriptstyle 1} & \quad & \quad & \quad & \quad & \quad \\
   {\scriptstyle 2} & \quad & \quad & \quad \\
   {\scriptstyle \vdots} & \quad & \quad \\
   {\scriptstyle k-1} & \quad \\
   {\scriptstyle k}
  \end{ytableau}
  \ \text{or} \
  \begin{ytableau}
   {\scriptstyle 1} & {\scriptstyle k} & \quad & \quad & \quad & \quad \\
   {\scriptstyle 2} & \quad & \quad & \quad \\
   {\scriptstyle \vdots} & \quad & \quad \\
   {\scriptstyle k-1} & \quad \\
   {\scriptstyle *}
  \end{ytableau},
    \ \text{or} \
  \begin{ytableau}
   {\scriptstyle 1} & \quad & \quad & \quad & \quad & \quad \\
   {\scriptstyle *} & \quad & \quad & \quad \\
   {\scriptstyle \vdots} & \quad & \quad \\
   {\scriptstyle *} & \quad \\
   {\scriptstyle *}
  \end{ytableau}.
\end{align}
A more explicit example for $k=3$ and $N=4$ under $\lambda=(3,2,1)$ is displayed below,
\begin{align*}
&
\{ a_{\lambda} \}
=
\bigg\{
  \begin{ytableau}
   {\scriptstyle 1} & {\scriptstyle 4} & {\scriptstyle 6} \\
   {\scriptstyle 2} & {\scriptstyle 5} \\
   {\scriptstyle 3}
  \end{ytableau}, \quad
    \begin{ytableau}
   {\scriptstyle 1} & {\scriptstyle 4} & {\scriptstyle 5} \\
   {\scriptstyle 2} & {\scriptstyle 6} \\
   {\scriptstyle 3}
  \end{ytableau}
  \bigg\},
\\
&
  \{ s_{\lambda} \}
=
\bigg\{
  \begin{ytableau}
   {\scriptstyle 1} & {\scriptstyle 2} & {\scriptstyle 6} \\
   {\scriptstyle 3} & {\scriptstyle 5} \\
   {\scriptstyle 4}
  \end{ytableau}, \quad
    \begin{ytableau}
   {\scriptstyle 1} & {\scriptstyle 2} & {\scriptstyle 5} \\
   {\scriptstyle 3} & {\scriptstyle 6} \\
   {\scriptstyle 4}
  \end{ytableau},
\quad
 \begin{ytableau}
   {\scriptstyle 1} & {\scriptstyle 2} & {\scriptstyle 6} \\
   {\scriptstyle 3} & {\scriptstyle 4} \\
   {\scriptstyle 5}
\end{ytableau}
\quad
\begin{ytableau}
   {\scriptstyle 1} & {\scriptstyle 2} & {\scriptstyle 4} \\
   {\scriptstyle 3} & {\scriptstyle 6} \\
   {\scriptstyle 5}
  \end{ytableau},
  \quad
  \begin{ytableau}
   {\scriptstyle 1} & {\scriptstyle 2} & {\scriptstyle 5} \\
   {\scriptstyle 3} & {\scriptstyle 4} \\
   {\scriptstyle 6}
  \end{ytableau},
  \quad
  \begin{ytableau}
   {\scriptstyle 1} & {\scriptstyle 2} & {\scriptstyle 4} \\
   {\scriptstyle 3} & {\scriptstyle 5} \\
   {\scriptstyle 6}
  \end{ytableau}
 \bigg\}, 
\\
&
\{ m_{\lambda} \}
=
\bigg\{
  \begin{ytableau}
   {\scriptstyle 1} & {\scriptstyle 2} & {\scriptstyle 3} \\
   {\scriptstyle 4} & {\scriptstyle 6} \\
   {\scriptstyle 5}
  \end{ytableau}, \quad
    \begin{ytableau}
   {\scriptstyle 1} & {\scriptstyle 2} & {\scriptstyle 3} \\
   {\scriptstyle 4} & {\scriptstyle 5} \\
   {\scriptstyle 6}
  \end{ytableau}, \quad
  \begin{ytableau}
   {\scriptstyle 1} & {\scriptstyle 3} & {\scriptstyle 6} \\
   {\scriptstyle 2} & {\scriptstyle 5} \\
   {\scriptstyle 4}
  \end{ytableau}, \quad \cdots
  \bigg\}.
\end{align*}

We can adopt column-lexicographic order $\preceq$ to label tableaux, so $\mr{SYT}_{\lambda}^{a} \preceq \mr{SYT}_{\lambda}^{s} \preceq \mr{SYT}_{\lambda}^{m}$ such that $p_{\lambda}=1,\ldots,\dim \mathbb{Y}_{\lambda}$. The $\rho_{\lambda}$ and $X_{\lambda}$ can be expressed as the following the block structures,
\begin{eqnarray}
&
\rho_{\lambda}=
    \begin{tikzpicture}[baseline={-0.5ex},mymatrixenv]
        \matrix [mymatrix,inner sep=4pt] (m)  
        {
    \tikzmarkin[kwad=style green]{aa} \rho_{a_{0},a_{0}}  & \cdots  & \rho_{a_{0},a_{1}}  & \tikzmarkin[kwad=style orange]{ncn} \rho_{a_{0},s_{0}} & \cdots & \rho_{a_{0},s_{1}} & \tikzmarkin[kwad=style cyan]{rmne} \rho_{a_{0},m_{0}} & \cdots & \rho_{a_{0},m_{1}} \\
    \vdots & \ddots & \vdots & \vdots & \ddots & \vdots & \vdots & \ddots & \vdots \\
    \rho_{a_{1},a_{0}}  & \cdots  & \rho_{a_{1},a_{1}} \tikzmarkend{aa} & \rho_{a_{1},s_{0}} & \cdots & \rho_{a_{1},s_{1}} \tikzmarkend{ncn} & \rho_{a_{1},m_{0}} & \cdots & \rho_{a_{1},m_{1}} \tikzmarkend{rmne} \\
\tikzmarkin[kwad=style orange]{ncw} \rho_{s_{0},a_{0}}  & \cdots  & \rho_{s_{0},a_{1}}  & \tikzmarkin[kwad=style orange]{ncc} \rho_{s_{0},s_{0}} & \cdots & \rho_{s_{0},s_{1}} & \tikzmarkin[kwad=style cyan]{rmec} \rho_{s_{0},m_{0}} & \cdots & \rho_{s_{0},m_{1}} \\
    \vdots & \ddots & \vdots & \vdots & \ddots & \vdots & \vdots & \ddots & \vdots \\
    \rho_{s_{1},a_{0}}  & \cdots  & \rho_{s_{1},a_{1}} \tikzmarkend{ncw} & \rho_{s_{1},s_{0}} & \cdots & \rho_{s_{1},s_{1}} \tikzmarkend{ncc} & \rho_{s_{1},m_{0}} & \cdots & \rho_{s_{1},m_{1}} \tikzmarkend{rmec} \\
\tikzmarkin[kwad=style cyan]{rmsw} \rho_{m_{0},a_{0}}  & \cdots  & \rho_{m_{0},a_{1}} & \tikzmarkin[kwad=style cyan]{rmsc} \rho_{m_{0},s_{0}} & \cdots & \rho_{m_{0},s_{1}} & \tikzmarkin[kwad=style cyan]{rmc} \rho_{m_{0},m_{0}} & \cdots & \rho_{m_{0},m_{1}} \\
    \vdots & \ddots & \vdots & \vdots & \ddots & \vdots & \vdots & \ddots & \vdots \\
    \rho_{m_{1},a_{0}}  & \cdots  & \rho_{m_{1},a_{1}} \tikzmarkend{rmsw} & \rho_{m_{1},s_{0}} & \cdots & \rho_{m_{1},s_{1}} \tikzmarkend{rmsc} & \rho_{m_{1},m_{0}} & \cdots & \rho_{m_{1},m_{1}} \tikzmarkend{rmc} \\
    };  
        \mymatrixbraceright{1}{3}{$\{ a_{\lambda} \}$}
        \mymatrixbraceright{4}{6}{$\{ s_{\lambda} \}$}
        \mymatrixbraceright{7}{9}{$\{ m_{\lambda} \}$}
        \mymatrixbracetop{1}{3}{$\{ a'_{\lambda} \} $}
        \mymatrixbracetop{4}{6}{$\{ s'_{\lambda} \}$}
        \mymatrixbracetop{7}{9}{$\{ m'_{\lambda} \}$}
    \end{tikzpicture},
    \label{eq:Tableau-labeledMatrices-rho}
    \\
&
X_{\lambda}=
    \begin{tikzpicture}[baseline={-0.5ex},mymatrixenv]
        \matrix [mymatrix,inner sep=4pt] (m)  
        {
    \tikzmarkin[kwad=style green]{aax} X  &  &  &  &  & & & & \\
    & \ddots & & & & & & & \\
    & & X \tikzmarkend{aax} &  & & &  &  &  \\
    & & &  & & &  &  &  \\
    & & &  & & &  &  &  \\
    & & &  & & &  &  &  \\ 
    & & &  & & &  &  &  \\ 
    & & &  & & &  &  &  \\ 
    & & &  & & &  &  & \\
    };  
        \mymatrixbraceright{1}{3}{$\{ a_{\lambda} \}$}
        \mymatrixbraceright{4}{6}{$\{ s_{\lambda} \}$}
        \mymatrixbraceright{7}{9}{$\{ m_{\lambda} \}$}
        \mymatrixbracetop{1}{3}{$\{ a'_{\lambda} \} $}
        \mymatrixbracetop{4}{6}{$\{ s'_{\lambda} \}$}
        \mymatrixbracetop{7}{9}{$\{ m'_{\lambda} \}$}
    \end{tikzpicture},
    \label{eq:Tableau-labeledMatrices-X}
\end{eqnarray}
where $\rho_{a_{\lambda} a'_{\lambda}} , \ldots , \rho_{m_{\lambda} m'_{\lambda}}$ are tableau-labeled matrices with size $\mathbb{M}_{d^{N+1} \times d^{N+1}}(\mathbb{C})$ which are contained in blocks $\{ \rho_{a_{\lambda} a'_{\lambda}} \} , \ldots , \{ \rho_{m_{\lambda} m'_{\lambda}} \}$ respectively.
The $X_{p_{\lambda},p'_{\lambda}}$ is block-diagonal in the sense that $X_{p_{\lambda},p'_{\lambda}}=X_{p_{\lambda},p_{\lambda}}\delta_{p_{\lambda},p'_{\lambda}}$, and $X_{p_{\lambda} p'_{\lambda}} \neq 0$ only when $\ket{p_{\lambda}}=\Pi_{k} \ket{ p_{\lambda} } \neq 0$, or say, standard Young tableau whose first column is filled by $1$ to $k$.
And $X_{p_{\lambda},p_{\lambda}} \neq 0$ only when $p_{\lambda}$ belongs to the left-hand-side class.

The indices labeling tableau-labeled matrices are $a_{\lambda}=a_{0} , \ldots , a_{1}$ of $| \mr{SYT}_{\lambda}^{a} |$, and $s_{\lambda}=s_{0} , \ldots , s_{1}$ of $| \mr{SYT}_{\lambda}^{s} |$, and $m_{\lambda}=m_{0} , \ldots , m_{1}$ of $| \mr{SYT}_{\lambda}^{m} |$, with respect to cardinalities
\begin{align}
| \mr{SYT}_{\lambda}^{a} |=\dim \mathbb{Y}_{\lambda / (1^{k})},
\quad
| \mr{SYT}_{\lambda}^{s} |=\dim \mathbb{Y}_{\lambda / (2,1^{k-2})},
\quad
| \mr{SYT}_{\lambda}^{m} |=\dim \mathbb{Y}_{\lambda}-| \mr{SYT}_{\lambda}^{a} |-| \mr{SYT}_{\lambda}^{s} |.
\label{eq:formulas-DiagramBlock}
\end{align}

The size of $\lambda$-block, denoted by $d_{\lambda}$, could be chosen smaller than
$\dim \mathbb{Y}_{\lambda}$,
\begin{align}
d_{\lambda}
=
| \mr{SYT}_{\lambda}^{a} |
+
| \mr{SYT}_{\lambda}^{s} |
=
\dim \mathbb{Y}_{\lambda / (1^{k})}+\dim \mathbb{Y}_{\lambda / (2,1^{k-1})}
\leq \dim \mathbb{Y}_{\lambda}
\label{eq:Size-diagram-blocks}
\end{align}
by setting $\{ \rho_{m_{\lambda}, p_{\lambda}} \}$ and $\{ \rho_{p_{\lambda}, m_{\lambda}} \}$ ({\color{cyan}{cyan}} tableau-labeled matrices) to zero.
Coupling $\mathrm{Alt}^{k-1} \mathbb{C}^{k}$ (Alice's auxiliary) with $\mathbb{C}^{k}$ ($B_{1}$'s auxiliary) is fixed as Eq.\eqref{eq:Coupling-N=1} which should be either of $\{ a_{\lambda} \}$ or of $\{ s_{\lambda} \}$.
Note that the irreducible representation matrix $\pi_{ \lambda } \in S_{N}$ has the form of
\begin{eqnarray}
\pi_{ \lambda }
=
    \begin{tikzpicture}[baseline={-0.5ex},mymatrixenv]
        \matrix [mymatrix,inner sep=4pt] (m)  
        {
    \tikzmarkin[kwad=style green]{aapi} \pi_{a_{0},a_{0}}  & \cdots  & \pi_{a_{0},a_{1}}  & \tikzmarkin[kwad=style orange]{ncnpi} \pi_{a_{0},s_{0}} & \cdots & \pi_{a_{0},s_{1}} & & & \\
    \vdots & \ddots & \vdots & \vdots & \ddots & \vdots & & & \\
    \pi_{a_{1},a_{0}}  & \cdots  & \pi_{a_{1},a_{1}} \tikzmarkend{aapi} & \pi_{a_{1},s_{0}} & \cdots & \pi_{a_{1},s_{1}} \tikzmarkend{ncnpi} & & & \\
\tikzmarkin[kwad=style orange]{ncwpi} \pi_{s_{0},a_{0}}  & \cdots  & \pi_{s_{0},a_{1}}  & \tikzmarkin[kwad=style orange]{nccpi} \pi_{s_{0},s_{0}} & \cdots & \pi_{s_{0},s_{1}} & & & \\
    \vdots & \ddots & \vdots & \vdots & \ddots & \vdots & & & \\
    \pi_{s_{1},a_{0}}  & \cdots  & \pi_{s_{1},a_{1}} \tikzmarkend{ncwpi} & \pi_{s_{1},s_{0}} & \cdots & \pi_{s_{1},s_{1}} \tikzmarkend{nccpi} & && \\
 & & & & &  & \tikzmarkin[kwad=style cyan]{rmcpi} \pi_{m_{0},m_{0}} & \cdots & \pi_{m_{0},m_{1}} \\
    & & & & & & \vdots & \ddots & \vdots \\
     & & & & & & \pi_{m_{1},m_{0}} & \cdots & \pi_{m_{1},m_{1}} \tikzmarkend{rmcpi} \\
    };
        \mymatrixbraceright{1}{3}{$\{ a_{\lambda} \}$}
        \mymatrixbraceright{4}{6}{$\{ s_{\lambda} \}$}
        \mymatrixbraceright{7}{9}{$\{ m_{\lambda} \}$}
        \mymatrixbracetop{1}{3}{$\{ a'_{\lambda} \} $}
        \mymatrixbracetop{4}{6}{$\{ s'_{\lambda} \}$}
        \mymatrixbracetop{7}{9}{$\{ m'_{\lambda} \}$}
    \end{tikzpicture},
\end{eqnarray}
because basis $\ket{m_{\lambda}}$ is by no mean being transformed to neither $\mr{SYT}_{\lambda}^{a}$ nor $\mr{SYT}_{\lambda}^{s}$ by $S_{N}$.
The $\pi$'s left-action on $\rho_{\lambda}$ is given by
\begin{align}
\Delta{\pi} \cdot \rho_{\lambda}
=
\left( \pi_{\lambda} \otimes \id_{d}^{\otimes (N+1)}
\right)
\cdot
\left(\id_{\mathbb{Y}_{\lambda}} \otimes \pi \right)
\rho_{\lambda},
  \label{eq:Permutation-YoungDiagramStates}
\end{align}
and could be illustrated as
\begin{eqnarray}
\Delta_{\lambda}(\pi) \cdot \rho_{\lambda}
&=\begin{tikzpicture}[baseline={-0.5ex},mymatrixenv]
        \matrix [mymatrix,inner sep=4pt] (m)  
        {
    \tikzmarkin[kwad=style green]{aapirhopi} \pi_{a_{0},a_{0}} \id & \cdots  & \pi_{a_{0},a_{1}} \id & \tikzmarkin[kwad=style orange]{ncnpirhopi} \pi_{a_{0},s_{0}} \id & \cdots & \pi_{a_{0},s_{1}} \id & & & \\
    \vdots & \ddots & \vdots & \vdots & \ddots & \vdots & & & \\
    \pi_{a_{1},a_{0}} \id & \cdots  & \pi_{a_{1},a_{1}} \id \tikzmarkend{aapirhopi} & \pi_{a_{1},s_{0}} \id & \cdots & \pi_{a_{1},s_{1}} \id \tikzmarkend{ncnpirhopi} & & & \\
\tikzmarkin[kwad=style orange]{ncwpirhopi} \pi_{s_{0},a_{0}} \id & \cdots  & \pi_{s_{0},a_{1}} \id & \tikzmarkin[kwad=style orange]{nccpirhopi} \pi_{s_{0},s_{0}} \id & \cdots & \pi_{s_{0},s_{1}} \id & & & \\
    \vdots & \ddots & \vdots & \vdots & \ddots & \vdots & & & \\
    \pi_{s_{1},a_{0}} \id & \cdots  & \pi_{s_{1},a_{1}} \id \tikzmarkend{ncwpirhopi} & \pi_{s_{1},s_{0}} \id & \cdots & \pi_{s_{1},s_{1}} \id \tikzmarkend{nccpirhopi} & && \\
 & & & & &  & \tikzmarkin[kwad=style cyan]{rmcpirhopi} \pi_{m_{0},m_{0}} \id & \cdots & \pi_{m_{0},m_{1}} \id \\
    & & & & & & \vdots & \ddots & \vdots \\
     & & & & & & \pi_{m_{0},m_{1}} \id & \cdots & \pi_{m_{1},m_{1}} \id \tikzmarkend{rmcpirhopi} \\
    };
    \end{tikzpicture}
\nonumber \\
&
\cdot
    \begin{tikzpicture}[baseline={-0.5ex},mymatrixenv]
        \matrix [mymatrix,inner sep=4pt] (m)  
        {
    \tikzmarkin[kwad=style green]{aapirhorho} \pi \rho_{a_{0},a_{0}}  & \cdots  & \pi \rho_{a_{0},a_{1}}  & \tikzmarkin[kwad=style orange]{ncnpirhorho} \pi \rho_{a_{0},s_{0}} & \cdots & \pi \rho_{a_{0},s_{1}} & \tikzmarkin[kwad=style cyan]{rmnepirhorho} \pi \rho_{a_{0},m_{0}} & \cdots & \pi \rho_{a_{0},m_{1}} \\
    \vdots & \ddots & \vdots & \vdots & \ddots & \vdots & \vdots & \ddots & \vdots \\
    \pi \rho_{a_{1},a_{0}}  & \cdots  & \pi \rho_{a_{1},a_{1}} \tikzmarkend{aapirhorho} & \pi \rho_{a_{1},s_{0}} & \cdots & \pi \rho_{a_{1},s_{1}} \tikzmarkend{ncnpirhorho} & \pi \rho_{a_{1},m_{0}} & \cdots & \pi \rho_{a_{1},m_{1}} \tikzmarkend{rmnepirhorho} \\
\tikzmarkin[kwad=style orange]{ncwpirhorho} \pi \rho_{s_{0},a_{0}}  & \cdots  & \pi \rho_{s_{0},a_{1}}  & \tikzmarkin[kwad=style orange]{nccpirhorho} \pi \rho_{s_{0},s_{0}} & \cdots & \pi \rho_{s_{0},s_{1}} & \tikzmarkin[kwad=style cyan]{rmecpirhorho} \pi \rho_{s_{0},m_{0}} & \cdots & \pi \rho_{s_{0},m_{1}} \\
    \vdots & \ddots & \vdots & \vdots & \ddots & \vdots & \vdots & \ddots & \vdots \\
   \pi  \rho_{s_{1},a_{0}}  & \cdots  & \pi \rho_{s_{1},a_{1}} \tikzmarkend{ncwpirhorho} & \pi \rho_{s_{1},s_{0}} & \cdots & \pi \rho_{s_{1},s_{1}} \tikzmarkend{nccpirhorho} & \pi \rho_{s_{1},m_{0}} & \cdots & \pi \rho_{s_{1},m_{1}} \tikzmarkend{rmecpirhorho} \\
\tikzmarkin[kwad=style cyan]{rmswpirhorho} \pi \rho_{m_{0},a_{0}}  & \cdots  & \pi \rho_{m_{0},a_{1}} & \tikzmarkin[kwad=style cyan]{rmscpirhorho} \pi \rho_{m_{0},s_{0}} & \cdots & \pi \rho_{m_{0},s_{1}} & \tikzmarkin[kwad=style cyan]{rmcpirhorho} \pi \rho_{m_{0},m_{0}} & \cdots & \pi \rho_{m_{0},m_{1}} \\
    \vdots & \ddots & \vdots & \vdots & \ddots & \vdots & \vdots & \ddots & \vdots \\
    \pi \rho_{m_{1},a_{0}}  & \cdots  & \pi \rho_{m_{1},a_{1}} \tikzmarkend{rmswpirhorho} & \pi \rho_{m_{1},s_{0}} & \cdots & \pi \rho_{m_{1},s_{1}} \tikzmarkend{rmscpirhorho} & \pi \rho_{m_{1},m_{0}} & \cdots & \pi \rho_{m_{1},m_{1}} \tikzmarkend{rmcpirhorho} \\
    };
    \end{tikzpicture}.
  \label{eq:Permutation-YoungDiagramStates-Illustration}
\end{eqnarray}
The {\color{cyan}{cyan}} tableau-labeled matrices $\{ \rho_{m_{\lambda}, p_{\lambda}} \}$ and $\{ \rho_{p_{\lambda}, m_{\lambda}} \}$ are removable since they are invariant subspace under the permutation thus setting them into zero matrices is still consistent with permutation constraints.
Hence, below simplification is admitted,
\begin{eqnarray}
&
\rho_{\lambda} \mapsto
\rho_{\lambda}=
    \begin{tikzpicture}[baseline={-0.5ex},mymatrixenv]
        \matrix [mymatrix,inner sep=4pt] (m)  
        {
    \tikzmarkin[kwad=style green]{aar} \rho_{a_{0},a_{0}}  & \cdots  & \rho_{a_{0},a_{1}}  & \tikzmarkin[kwad=style orange]{ncnr} \rho_{a_{0},s_{0}} & \cdots & \rho_{a_{0},s_{1}} \\
    \vdots & \ddots & \vdots & \vdots & \ddots & \vdots \\
    \rho_{a_{1},a_{0}}  & \cdots  & \rho_{a_{1},a_{1}} \tikzmarkend{aar} & \rho_{a_{1},s_{0}} & \cdots & \rho_{a_{1},s_{1}} \tikzmarkend{ncnr} \\
\tikzmarkin[kwad=style orange]{ncwr} \rho_{s_{0},a_{0}}  & \cdots  & \rho_{s_{0},a_{1}}  & \tikzmarkin[kwad=style orange]{nccr} \rho_{s_{0},s_{0}} & \cdots & \rho_{s_{0},s_{1}} \\
    \vdots & \ddots & \vdots & \vdots & \ddots & \vdots \\
    \rho_{s_{1},a_{0}}  & \cdots  & \rho_{s_{1},a_{1}} \tikzmarkend{ncwr} & \rho_{s_{1},s_{0}} & \cdots & \rho_{s_{1},s_{1}} \tikzmarkend{nccr} \\
    };  
        \mymatrixbraceright{1}{3}{$\{ a_{\lambda} \}$}
        \mymatrixbraceright{4}{6}{$\{ s_{\lambda} \}$}
        \mymatrixbracetop{1}{3}{$\{ a'_{\lambda} \} $}
        \mymatrixbracetop{4}{6}{$\{ s'_{\lambda} \}$}
    \end{tikzpicture},
    \label{eq:Tableau-labeledMatrices-rho-reduced}
\\
&
\pi_{ \lambda }
\mapsto
\pi_{ \lambda }
=
    \begin{tikzpicture}[baseline={-0.5ex},mymatrixenv]
        \matrix [mymatrix,inner sep=4pt] (m)  
        {
    \tikzmarkin[kwad=style green]{aapir} \pi_{a_{0},a_{0}}  & \cdots  & \pi_{a_{0},a_{1}}  & \tikzmarkin[kwad=style orange]{ncnpir} \pi_{a_{0},s_{0}} & \cdots & \pi_{a_{0},s_{1}} \\
    \vdots & \ddots & \vdots & \vdots & \ddots & \vdots \\
    \pi_{a_{1},a_{0}}  & \cdots  & \pi_{a_{1},a_{1}} \tikzmarkend{aapir} & \pi_{a_{1},s_{0}} & \cdots & \pi_{a_{1},s_{1}} \tikzmarkend{ncnpir} \\
\tikzmarkin[kwad=style orange]{ncwpir} \pi_{s_{0},a_{0}}  & \cdots  & \pi_{s_{0},a_{1}}  & \tikzmarkin[kwad=style orange]{nccpir} \pi_{s_{0},s_{0}} & \cdots & \pi_{s_{0},s_{1}} \\
    \vdots & \ddots & \vdots & \vdots & \ddots & \vdots \\
    \pi_{s_{1},a_{0}}  & \cdots  & \pi_{s_{1},a_{1}} \tikzmarkend{ncwpir} & \pi_{s_{1},s_{0}} & \cdots & \pi_{s_{1},s_{1}} \tikzmarkend{nccpir} \\
    };
        \mymatrixbraceright{1}{3}{$\{ a_{\lambda} \}$}
        \mymatrixbraceright{4}{6}{$\{ s_{\lambda} \}$}
        \mymatrixbracetop{1}{3}{$\{ a'_{\lambda} \} $}
        \mymatrixbracetop{4}{6}{$\{ s'_{\lambda} \}$}
    \end{tikzpicture},
\end{eqnarray}
This simplification also alleviates the number of scalar constraints:
The matrix equation Eq.\eqref{eq:Permutation-YoungDiagramStates} produces at most $(N-1) d_{\lambda}^2 d^{N+1}$ scalar constraints where $N-1$ is the number of Coxeter generators apart from $\id$, and $d_{\lambda}^2$ the square of $\lambda$-block's size, $d^{N+1}$ the size of tableau-labeled matrices.


\subsection{Asymptotic diagram-block size} \label{subsec:Asymptotic Diagram-blockSize}

Consider Young diagram $\lambda=(\lambda_{1}, \ldots , \lambda_{k}) \vdash_{k} N+k-1$. Each $\lambda$-block contains tableau-labeled matrices of size $d^{N+1}$. So we should estimate $d_{\lambda}$ in Eq.\eqref{eq:Size-diagram-blocks}.
We compute the ratio $\dim \mathbb{Y}_{\lambda / (1^{k})}$ and $\dim \mathbb{Y}_{\lambda / (2,1^{k-2})}$, for the reason that the ratio should be expected to relate to the optimal value in the following manner:
\begin{align*}
\mathsf{SDP}_{k,N}^{\lambda}
\propto
\frac{
\dim \mathbb{Y}_{\lambda / (1^{k})}
}{
\dim \mathbb{Y}_{\lambda / (1^{k})} + \dim \mathbb{Y}_{\lambda / (2,1^{k-2})}
}, \ \text{as } N \to \infty,
\end{align*}
because of $\tr \rho_{\lambda}=1$ and the fact that only blocks in $\mathbb{Y}_{\lambda / (1^{k})}$ contribute to the objective function.
Interestingly, this ratio relates to the shifted Schur function \cite{Goulden1994superSchur,okounkov1996shifted,Borodin_Olshanski_2016},
\begin{align}
\frac{\dim \mathbb{Y}_{\lambda / \mu}}{\dim \mathbb{Y}_{\lambda}}
=
\frac{s_{\mu}^{*}(\lambda)}
{ | \lambda |^{\downarrow |\mu| }},
\label{eq:Ratio-skewSYT-shiftedSchur}
\end{align}
where $s_{\mu}^{*}$ denotes shifted Schur function with arguments $\lambda=(\lambda_{1}, \lambda_{2}, \ldots, )$, and $\downarrow$ denotes falling factorial power $x^{\downarrow m}:=x(x-1)\cdots(x-m+1)$ if $m=1,2,\ldots,$ and $x^{\downarrow 0}=1$ for $m=0$. The relation Eq.\eqref{eq:Ratio-skewSYT-shiftedSchur} admits asymptotic expression by graded symmetric algebra \cite{Borodin_Olshanski_2016}.

Shifted Schur function can be computed by the following combinatorial formula \cite{Goulden1994superSchur,okounkov1996shifted}:
\begin{align}
s_{\mu}^{*}(x_{1},\ldots,)
=
\sum_{T} \sum_{\Box \in \mu} (x_{T(\Box)}-c(\Box)),
\end{align}
where the sum runs over reverse semi-standard Young tableaux $T$ and if $\Box=(i,j)$ then $c(\Box)=j-i$ (called content). Using this formula, the ratio is given by as follows,
\begin{align}
\frac{
\dim \mathbb{Y}_{\lambda / (1^{k})}
}{
\dim \mathbb{Y}_{\lambda / (2,1^{k-2})}
}
&=
\frac{1}{\frac{N-\lambda_{k}}{\lambda_{k}} \prod_{i=1}^{k-1}(1-\frac{1}{\lambda_{i}+k-i} )
+
\sum_{j=1}^{k-1} \frac{N-1}{\lambda_{j}+k-j}
\prod_{i=1}^{j-1} (1-\frac{1}{\lambda_{i}+k-i})
 }
\end{align}
For example, this gives ratio $1/3$ for $k=3$ and $N=4$ under $\lambda=(3,2,1)$.

Denote $\omega=(\omega_{1},\ldots,\omega_{k})$ with $\omega_{i}=\lambda_{i}/(N+k-1)$ for $i=1,\ldots,k$. Asymptotically, we have
\begin{align}
\frac{
\dim \mathbb{Y}_{\lambda / (1^{k})}
}{
\dim \mathbb{Y}_{\lambda / (2,1^{k-2})}
}
\overset{N \to \infty}{\sim}
\frac{
1
}{
\sum_{j=1}^{k} \frac{1}{\omega_{j}} - 1 }
\leq
\frac{1}{k^2-1},
\end{align}
since the harmonic mean is lower or equal to the arithmetic mean. The equality holds if and only if $\omega_{1}=\cdots=\omega_{k}=1/k$ corresponding to the rectangular shape of Young diagrams.

Note that $\mathbb{Y}_{\lambda / (1^{k})}$ is an irreducible representation of $S_{N-1}$ corresponding to $(\lambda_{1}-1,\ldots,\lambda_{k}-1)$. Asymptotically, the $\dim \mathbb{Y}_{\lambda / (1^{k})}$ could be expressed in big O notation (e.g., by using formula, Proposition 2.1 in \cite{regev2010asymptSYT}),
\begin{align}
\dim \mathbb{Y}_{\lambda / (1^{k})}
\sim
O(k^{N-1} (N-1)^{-\frac{k^2+k-2}{4}}).
\end{align}




\section{Conclusion \& Perspectives}

In this paper, we studied the problem of testing $k$-block-positivity through the $k$-extension and extendibility hierarchy. The $k$-extension converts the problem of $k$-block-positivity testing into block-positivity testing, thus providing a mathematical starting point towards computational testing. Considering the large computational resources this may require, symmetry reduction is considered. We use dualization that converts $\bar{U} \otimes U$-symmetry to $U \otimes U$-symmetry for two reasons.
(i) the symmetry reduction relies on Schur-Weyl duality, which allows us to decompose feasible states into the Young diagram-blocks; (ii) the nonzero objective function is supported on Young diagrams having $k$-rows. We then show how to implement the permutational invariance on the diagram-blocks, and estimate the size of the diagram-blocks. Within a $\lambda$-block, there are two classes of tableau-labeled matrices: (a) matrices contribute to the objective function $\tr X \rho$ that are labeled by standard Young tableaux of $\lambda / (1^{k})$; (b) matrices balance the trace $\tr \rho=1$ via permutational constraints. The ratio of the numbers of the two classes of matrices, relates to the shifted Schur functions with respect to $\lambda$, and the maximal ratio takes when $\lambda$ is rectangle.




\section*{Acknowledgements}

Q.C. and O.F. acknowledge financial support from the European Research Council under the Grant Agreement No.~851716 (ERC, AlgoQIP) and from the European Union’s Horizon 2020 research and innovation programme under Grant Agreement No 101017733 (QuantERA II, VERIqTAS).
B. C. is supported by JSPS Grant-in-Aid Scientific Research (B) no. 21H00987, and Challenging Research (Exploratory) no.  23K17299.
B. C. was also supported by the Chaire Jean Morlet and acknowledges the hospitality of ENS Lyon during the fall 2024, which allowed us to initiate this project. 




\begin{appendices}


\section{Example: isotropic states at $k=2$} \label{app:example-k=2}

This appendix is meant to provide a self-contained illustration for the context of the paper through the example of the isotropic state with parameter $\alpha$ for the case of $k=2$,
\begin{align}
X=\id_{d} \otimes \id_{d} + \alpha d \ketbra{\phi_{d}}{\phi_{d}}.
\end{align}
Beginning with level $N=1$, we write the maximally entangled state for the auxiliary system as $\ket{\phi_2} = \frac{1}{\sqrt{2}}(\ket{1^*1} + \ket{2^*2})$. Under the change of basis $\ket{1^{*}} \mapsto -\ket{2}$ and $\ket{2^{*}} \mapsto \ket{1}$, we can write $\ket{\phi_2} = \frac{1}{\sqrt{2}}(\ket{12} - \ket{21})$. More generally, for an arbitrary unitary $U$, we have
\begin{align}
\bar{U} \otimes U (\ket{1^{*} 1} + \ket{2^{*} 2})
=
\frac{U \otimes U}{\det U} (\ket{1 2} - \ket{2 1}).
\end{align}
Under this change of basis,
\begin{align}
\ketbra{\phi_{2}}{\phi_{2}} \otimes X
=
\Pi_{2} \otimes X=
\frac{1}{2}
\begin{pmatrix}
0 & 0 & 0 & 0 \\
0 & X & -X & 0 \\
0 & -X & X & 0 \\
0 & 0 & 0 & 0
\end{pmatrix},
\end{align}
where $\Pi_{2}= \frac{1}{2} (\ket{1 2} - \ket{2 1})(\bra{1 2} - \bra{2 1})$. The factor $\frac{1}{2}$ is the normalization factor of $\ket{\phi_{2}}$.
At level $N=1$ the Schur transform $T$ (recall Section~\ref{sec:schur}) changes the basis $\{ \ket{11} , \ket{12} , \ket{21} , \ket{22} \}$ to Schur basis $\ket{p_{\lambda} , q_{\lambda} }$ as below
\begin{align}
\begin{pmatrix}
\ket{\ytableausetup{boxsize=0.7em} \begin{ytableau} {\scriptstyle 1} \\ {\scriptstyle 2} \end{ytableau} , \begin{ytableau} {\scriptstyle 1} \\ {\scriptstyle 2} \end{ytableau}} \\
\ket{\begin{ytableau} {\scriptstyle 1} & {\scriptstyle 2} \end{ytableau} , \begin{ytableau} {\scriptstyle 1} & {\scriptstyle 1} \end{ytableau}} \\
\ket{\begin{ytableau} {\scriptstyle 1} & {\scriptstyle 2} \end{ytableau} , \begin{ytableau} {\scriptstyle 1} & {\scriptstyle 2} \end{ytableau}} \\
\ket{\begin{ytableau} {\scriptstyle 1} & {\scriptstyle 2} \end{ytableau} , \begin{ytableau} {\scriptstyle 2} & {\scriptstyle 2} \end{ytableau}}
\end{pmatrix}
=
\underbrace{
\begin{pmatrix}
0 & \frac{1}{\sqrt{2}} & -\frac{1}{\sqrt{2}} & 0 \\
1 & 0 & 0 & 0 \\
0 & \frac{1}{\sqrt{2}} & \frac{1}{\sqrt{2}} & 0 \\
0 & 0 & 0 & 1
\end{pmatrix}
}_{T}
\begin{pmatrix}
\ket{11} \\
\ket{12} \\
\ket{21} \\
\ket{22}
\end{pmatrix},
\end{align}
where the first slot for $\ket{p_{\lambda}}$ of the Schur basis, for example $\ket{\begin{ytableau} {\scriptstyle 1} & {\scriptstyle 2} \end{ytableau} , \begin{ytableau} {\scriptstyle 2} & {\scriptstyle 2} \end{ytableau}}$, is standard Young tableau and the second slot for $\ket{q_{\lambda}}$ semistandard Young tableau.
The $k$-extension $\Pi_{2} \otimes X$ is diagonalized by Schur transform, i.e., in the Schur basis, we have
\begin{align}
\Pi_{2} \otimes X
=
\begin{pmatrix}
X & 0 & 0 & 0 \\
0 & 0 & 0 & 0 \\
0 & 0 & 0 & 0 \\
0 & 0 & 0 & 0
\end{pmatrix}
=
\ketbra{ \begin{ytableau} {\scriptstyle 1} \\ {\scriptstyle 2} \end{ytableau} }{ \begin{ytableau} {\scriptstyle 1} \\ {\scriptstyle 2} \end{ytableau} }
\otimes
\ketbra{ \begin{ytableau} {\scriptstyle 1} \\ {\scriptstyle 2} \end{ytableau} }{ \begin{ytableau} {\scriptstyle 1} \\ {\scriptstyle 2} \end{ytableau} }
\otimes X.
\end{align}
Now for $\rho$ that is invariant under $U \otimes U$ on the auxiliary $k$-extension space, it can be shown\footnote{Beside recalling Theorem~\ref{thm:Twirling-PartialUnitary}, the form can be verified by using Weingarten calculus \cite{Collins2006Weingarten} and Schur transform.
}
that it has the following form
\begin{align}
\rho
&=
\begin{pmatrix}
w_{\begin{ytableau} \quad \\ \quad \end{ytableau}} \rho_{ \begin{ytableau} {\scriptstyle 1} \\ {\scriptstyle 2} \end{ytableau} , \begin{ytableau} {\scriptstyle 1} \\ {\scriptstyle 2} \end{ytableau} } & 0 & 0 & 0 \\
0 & \frac{1}{3} w_{\begin{ytableau} \quad & \quad \end{ytableau}}  \rho_{ \begin{ytableau} {\scriptstyle 1} & {\scriptstyle 2} \end{ytableau} , \begin{ytableau} {\scriptstyle 1} & {\scriptstyle 2} \end{ytableau} } & 0 & 0 \\
0 & 0 &  \frac{1}{3} w_{\begin{ytableau} \quad & \quad \end{ytableau}} \rho_{ \begin{ytableau} {\scriptstyle 1} & {\scriptstyle 2} \end{ytableau} , \begin{ytableau} {\scriptstyle 1} & {\scriptstyle 2} \end{ytableau} } & 0 \\
0 & 0 & 0 &  \frac{1}{3} w_{\begin{ytableau} \quad & \quad \end{ytableau}} \rho_{ \begin{ytableau} {\scriptstyle 1} & {\scriptstyle 2} \end{ytableau} , \begin{ytableau} {\scriptstyle 1} & {\scriptstyle 2} \end{ytableau} }
\end{pmatrix},
\end{align}
where $w_{\begin{ytableau} \quad \\ \quad \end{ytableau}}$ and $w_{\begin{ytableau} \quad & \quad \end{ytableau}}$ are nonnegative numbers associated with the Young diagrams $\begin{ytableau} \quad \\ \quad \end{ytableau}$ and $\begin{ytableau} \quad & \quad \end{ytableau}$ that satisfy $w_{\begin{ytableau} \quad \\ \quad \end{ytableau}}+w_{\begin{ytableau} \quad & \quad \end{ytableau}}=1$ due to the trace condition $\tr \rho=1$.
The $\rho_{ \begin{ytableau} {\scriptstyle 1} \\ {\scriptstyle 2} \end{ytableau} } , \rho_{ \begin{ytableau} {\scriptstyle 1} & {\scriptstyle 2} \end{ytableau} , \begin{ytableau} {\scriptstyle 1} & {\scriptstyle 2} \end{ytableau} } \in \mathbb{M}_{d^2 \times d^2}(\mathbb{C})$ are tableau-labeled matrices having trace one. In conclusion, we can write the reduced SDP as follows:
\begin{align}
\mathsf{SDP}_{k,1}^{\sym}(X)
&:=
\min_{\rho \geq 0} 
\tr [ ( \Pi_{2} \otimes X ) \rho ]
=
\min_{ \rho_{ \begin{ytableau} {\scriptstyle 1} \\ {\scriptstyle 2} \end{ytableau} , \begin{ytableau} {\scriptstyle 1} \\ {\scriptstyle 2} \end{ytableau} } \geq 0}
\tr [ X \rho_{ \begin{ytableau} {\scriptstyle 1} \\ {\scriptstyle 2} \end{ytableau} , \begin{ytableau} {\scriptstyle 1} \\ {\scriptstyle 2} \end{ytableau} } ]
, \\
\text{subject to } \:
&
\tr\rho_{ \begin{ytableau} {\scriptstyle 1} \\ {\scriptstyle 2} \end{ytableau} , \begin{ytableau} {\scriptstyle 1} \\ {\scriptstyle 2} \end{ytableau} }=1,
\nonumber
\end{align}
where we have set $w_{\begin{ytableau} \quad \\ \quad \end{ytableau}}=1$ and $w_{\begin{ytableau} \quad & \quad \end{ytableau}}=0$, since $\tr[(\Pi_{2} \otimes X) \rho_{\begin{ytableau} \quad & \quad \end{ytableau}}]=0$, setting $w_{\begin{ytableau} \quad & \quad \end{ytableau}}=0$ could maximize the negativity.

\bigskip

At level $N=2$, Schur transform $T$ changes the basis $\{ \ket{111} , \ket{112} , \ldots \}$ to Schur basis, read as
\begin{align}
\begin{pmatrix}
\ket{\begin{ytableau} {\scriptstyle 1} & {\scriptstyle 3} \\ {\scriptstyle 2} \end{ytableau} , \begin{ytableau} {\scriptstyle 1} & {\scriptstyle 1} \\ {\scriptstyle 2} \end{ytableau}} \\
\ket{\begin{ytableau} {\scriptstyle 1} & {\scriptstyle 3} \\ {\scriptstyle 2} \end{ytableau} , \begin{ytableau} {\scriptstyle 1} & {\scriptstyle 2} \\ {\scriptstyle 2} \end{ytableau} } \\
\ket{\begin{ytableau} {\scriptstyle 1} & {\scriptstyle 2} \\ {\scriptstyle 3} \end{ytableau} , \begin{ytableau} {\scriptstyle 1} & {\scriptstyle 1} \\ {\scriptstyle 2} \end{ytableau}} \\
\ket{\begin{ytableau} {\scriptstyle 1} & {\scriptstyle 2} \\ {\scriptstyle 3} \end{ytableau} , \begin{ytableau} {\scriptstyle 1} & {\scriptstyle 2} \\ {\scriptstyle 2} \end{ytableau}} \\
\ket{\begin{ytableau} {\scriptstyle 1} & {\scriptstyle 2} & {\scriptstyle 3} \end{ytableau} , \begin{ytableau} {\scriptstyle 1} & {\scriptstyle 1} & {\scriptstyle 1} \end{ytableau}} \\
\ket{\begin{ytableau} {\scriptstyle 1} & {\scriptstyle 2} & {\scriptstyle 3} \end{ytableau} , \begin{ytableau} {\scriptstyle 1} & {\scriptstyle 1} & {\scriptstyle 2} \end{ytableau}} \\
\ket{\begin{ytableau} {\scriptstyle 1} & {\scriptstyle 2} & {\scriptstyle 3} \end{ytableau} , \begin{ytableau} {\scriptstyle 1} & {\scriptstyle 2} & {\scriptstyle 2} \end{ytableau}} \\
\ket{\begin{ytableau} {\scriptstyle 1} & {\scriptstyle 2} & {\scriptstyle 3} \end{ytableau} , \begin{ytableau} {\scriptstyle 2} & {\scriptstyle 2} & {\scriptstyle 2} \end{ytableau}}
\end{pmatrix}
=
\underbrace{
\begin{pmatrix}
0 & 0 & \frac{1}{\sqrt{2}} & 0 & -\frac{1}{\sqrt{2}} & 0 & 0 & 0 \\
0 & 0 & 0 & \frac{1}{\sqrt{2}} & 0 & -\frac{1}{\sqrt{2}} & 0 & 0 \\
0 & \frac{2}{\sqrt{6}} & -\frac{1}{\sqrt{6}} & 0 & -\frac{1}{\sqrt{6}} & 0 & 0 & 0 \\
0 & 0 & 0 & \frac{1}{\sqrt{6}} & 0 & \frac{1}{\sqrt{6}} & -\frac{2}{\sqrt{6}} & 0 \\
1 & 0 & 0 & 0 & 0 & 0 & 0 & 0 \\
0 & \frac{1}{\sqrt{3}} & \frac{1}{\sqrt{3}} & 0 & \frac{1}{\sqrt{3}} & 0 & 0 & 0 \\
0 & 0 & 0 & \frac{1}{\sqrt{3}} & 0 & \frac{1}{\sqrt{3}} & \frac{1}{\sqrt{3}} & 0 \\
0 & 0 & 0 & 0 & 0 & 0 & 0 & 1
\end{pmatrix}
}_{T}
\begin{pmatrix}
\ket{111} \\
\ket{112} \\
\ket{121} \\
\ket{122} \\
\ket{211} \\
\ket{212} \\
\ket{221} \\
\ket{222}
\end{pmatrix}.
\end{align}
For $\rho_{2,2}$ the feasible state of $k=2$ and $N=2$ as Eq.\eqref{eq:Blocks-Nextension-states-dual} that is invariant under $U^{\otimes 3}$ on the auxiliary $k$-extension space, it can be shown that
\begin{align}
&
\rho_{2,2}=
w_{\begin{ytableau} {\scriptstyle \ } & {\scriptstyle \ } \\ {\scriptstyle \ } \end{ytableau}} \left(
\frac{\id_{2}}{2} \otimes
\rho_{\begin{ytableau} {\scriptstyle \ } & {\scriptstyle \ } \\ {\scriptstyle \ } \end{ytableau}}
\right)
\oplus
w_{\begin{ytableau} {\scriptstyle \ } & {\scriptstyle \ } & {\scriptstyle \ } \end{ytableau}} \left( \frac{\id_{4}}{4} \otimes \rho_{\begin{ytableau} {\scriptstyle \ } & {\scriptstyle \ } & {\scriptstyle \ } \end{ytableau}}
\right),
\label{eq:rho,N=2,Schurbasis}
\\
&
\text{where} \
\rho_{\begin{ytableau} {\scriptstyle \ } & {\scriptstyle \ } \\ {\scriptstyle \ } \end{ytableau}}
=
\begin{pmatrix}
a_{11} & a_{12} \\
a_{21} & a_{22}
\end{pmatrix}, \ 
\rho_{\begin{ytableau} {\scriptstyle \ } & {\scriptstyle \ } & {\scriptstyle \ } \end{ytableau}}
=
b.
\end{align}
Notice here that we adopt $\ket{p_{\lambda} q_{\lambda}} \cong \ket{q_{\lambda} p_{\lambda}}$ for getting compact expression, as mentioned in Theorem~\ref{thm:Twirling-PartialUnitary}.
The second line shows diagram-blocks, in which tableau-labeled matrices $a_{11}, \ldots , a_{22} , b \in \mathbb{M}_{d^3 \times d^3}(\mathbb{C})$ are contained, where the labels $1,2$ stand for standard Young tableaux $\begin{ytableau} {\scriptstyle 1} & {\scriptstyle 3} \\ {\scriptstyle 2} \end{ytableau}$ and $\begin{ytableau} {\scriptstyle 1} & {\scriptstyle 2} \\ {\scriptstyle 3} \end{ytableau}$ respectively. Conditions $\tr \rho_{2,2}=1$ and $\rho_{2,2} \geq 0$ lead to
\begin{align}
&
\tr \rho_{\begin{ytableau} {\scriptstyle \ } & {\scriptstyle \ } \\ {\scriptstyle \ } \end{ytableau}}=1, \quad
\tr \rho_{\begin{ytableau} {\scriptstyle \ } & {\scriptstyle \ } & {\scriptstyle \ } \end{ytableau}}=1, \quad
\rho_{\begin{ytableau} {\scriptstyle \ } & {\scriptstyle \ } \\ {\scriptstyle \ } \end{ytableau}}
\geq 0, \quad
\rho_{\begin{ytableau} {\scriptstyle \ } & {\scriptstyle \ } & {\scriptstyle \ } \end{ytableau}} \geq 0.
\end{align}
Note that $\rho_{\begin{ytableau} {\scriptstyle \ } & {\scriptstyle \ } \\ {\scriptstyle \ } \end{ytableau}} \geq 0$ implies $a_{21}=a_{12}^{\dagger}$.
The nonnegative numbers $w_{\begin{ytableau} {\scriptstyle \ } & {\scriptstyle \ } \\ {\scriptstyle \ } \end{ytableau}}, w_{\begin{ytableau} {\scriptstyle \ } & {\scriptstyle \ } & {\scriptstyle \ } \end{ytableau}}$ satisfy $w_{\begin{ytableau} {\scriptstyle \ } & {\scriptstyle \ } \\ {\scriptstyle \ } \end{ytableau}} + w_{\begin{ytableau} {\scriptstyle \ } & {\scriptstyle \ } & {\scriptstyle \ } \end{ytableau}}=1$ due to $\tr \rho_{2,2}=1$. On the other hand, in the Schur basis, the $X_{2,2}=\Pi_{2} \otimes \id_{2} \otimes X \otimes \id_{d}$, which is the $N=2$ level $k$-extension of $X$, is read as
\begin{align}
X_{2,2}
=
\id_{2} \otimes
\ketbra{ \begin{ytableau} {\scriptstyle 1} & {\scriptstyle 3} \\ {\scriptstyle 2} \end{ytableau} }{ \begin{ytableau} {\scriptstyle 1} & {\scriptstyle 3} \\ {\scriptstyle 2} \end{ytableau} }
\otimes (X \otimes \id_{d}).
\label{eq:XtoBeTest,N=2,IsotropicState}
\end{align}
Note that $\ketbra{ \begin{ytableau} {\scriptstyle 1} & {\scriptstyle 3} \\ {\scriptstyle 2} \end{ytableau} }{ \begin{ytableau} {\scriptstyle 1} & {\scriptstyle 3} \\ {\scriptstyle 2} \end{ytableau} }=\mathbb{P}_{\begin{ytableau} {\scriptstyle \ } & {\scriptstyle \ } \\ {\scriptstyle \ } \end{ytableau} / (1^2)}$, which is the projector of skew Young representation $\begin{ytableau} {\scriptstyle \ } & {\scriptstyle \ } \\ {\scriptstyle \ } \end{ytableau} / (1^2) \equiv \begin{ytableau} {\scriptstyle \bullet} & {\scriptstyle \ } \\ {\scriptstyle \bullet} \end{ytableau}$.
This also shows $X_{\begin{ytableau} {\scriptstyle \ } & {\scriptstyle \ } \\ {\scriptstyle \ } \end{ytableau}}$ by
\begin{align}
X_{\begin{ytableau} {\scriptstyle \ } & {\scriptstyle \ } \\ {\scriptstyle \ } \end{ytableau}}
=
\ketbra{ \begin{ytableau} {\scriptstyle 1} & {\scriptstyle 3} \\ {\scriptstyle 2} \end{ytableau} }{ \begin{ytableau} {\scriptstyle 1} & {\scriptstyle 3} \\ {\scriptstyle 2} \end{ytableau} }
\otimes (X \otimes \id_{d})
=
\begin{pmatrix}
X \otimes \id_{d} & 0_{d^3} \\
0_{d^3} & 0_{d^3}
\end{pmatrix}.
\end{align}
Multiplying Eq.~\eqref{eq:XtoBeTest,N=2,IsotropicState} with Eq.~\eqref{eq:rho,N=2,Schurbasis} then trace gets objective function $\tr(X_{2,2} \rho_{2,2})
=\tr(X_{\begin{ytableau} {\scriptstyle \ } & {\scriptstyle \ } \\ {\scriptstyle \ } \end{ytableau}} \rho_{\begin{ytableau} {\scriptstyle \ } & {\scriptstyle \ } \\ {\scriptstyle \ } \end{ytableau}})
=\tr[ (X\otimes \id_{d} ) a_{11} ]$.
The diagram-block $\rho_{\begin{ytableau} {\scriptstyle \ } & {\scriptstyle \ } & {\scriptstyle \ } \end{ytableau}}$ makes no contribution to the objective function. Indeed, it is a Young diagram with rows less than $k$. Hence only $\rho_{\begin{ytableau} {\scriptstyle \ } & {\scriptstyle \ } \\ {\scriptstyle \ } \end{ytableau}}$ is to be taken into account. Hence, set $w_{\begin{ytableau} {\scriptstyle \ } & {\scriptstyle \ } \\ {\scriptstyle \ } \end{ytableau}}=1$. The size of diagram-block $d_{\begin{ytableau} {\scriptstyle \ } & {\scriptstyle \ } \\ {\scriptstyle \ } \end{ytableau}}=d_{\begin{ytableau} {\scriptstyle \bullet} & {\scriptstyle \ } \\ {\scriptstyle \bullet} \end{ytableau}}+d_{\begin{ytableau} {\scriptstyle \bullet} & {\scriptstyle \bullet} \\ {\scriptstyle \ } \end{ytableau}}=2$.

\medskip

Note that whereas $a_{12}, a_{22}$ cannot be set as zero due to the permutational symmetry. This makes $\tr a_{22} \neq 0$ such that the optimal value gets affected by $a_{12}, a_{22}$. Imposing the permutational symmetry is realized by representing Coxeter generator $\tau_{2} \equiv (2,3)$ into $\Delta_{\begin{ytableau} {\scriptstyle \ } & {\scriptstyle \ } \\ {\scriptstyle \ } \end{ytableau}}$, which can be achieved by tensoring the irreducible representation matrix $(2,3)_{\begin{ytableau} {\scriptstyle \ } & {\scriptstyle \ } \\ {\scriptstyle \ } \end{ytableau}}=\begin{pmatrix}
\frac{1}{2} & \frac{\sqrt{3}}{2} \\
\frac{\sqrt{3}}{2} & -\frac{1}{2}
\end{pmatrix}$ with the canonical permutation representation $(2,3)$,
\begin{align}
\Delta_{\begin{ytableau} {\scriptstyle \ } & {\scriptstyle \ } \\ {\scriptstyle \ } \end{ytableau}}((2,3))
=
\begin{pmatrix}
\frac{1}{2} & \frac{\sqrt{3}}{2} \\
\frac{\sqrt{3}}{2} & -\frac{1}{2}
\end{pmatrix}
\otimes
(2,3),
\label{eq:ex-Delta2,1-2,3}
\end{align}
Hence the constraint of the permutational symmetry $\Delta_{\begin{ytableau} {\scriptstyle \ } & {\scriptstyle \ } \\ {\scriptstyle \ } \end{ytableau}}((2,3))\rho_{\begin{ytableau} {\scriptstyle \ } & {\scriptstyle \ } \\ {\scriptstyle \ } \end{ytableau}}=\rho_{\begin{ytableau} {\scriptstyle \ } & {\scriptstyle \ } \\ {\scriptstyle \ } \end{ytableau}}$ is read as
\begin{align}
\begin{pmatrix}
a_{11} & a_{12} \\
a_{12}^{\dagger} & a_{22}
\end{pmatrix}
=
\Delta_{\begin{ytableau} {\scriptstyle \ } & {\scriptstyle \ } \\ {\scriptstyle \ } \end{ytableau}}((2,3))
\begin{pmatrix}
a_{11} & a_{12} \\
a_{12}^{\dagger} & a_{22}
\end{pmatrix}
=
\begin{pmatrix}
\frac{1}{2} \id_{d}^{\otimes 3} & \frac{\sqrt{3}}{2} \id_{d}^{\otimes 3} \\
\frac{\sqrt{3}}{2} \id_{d}^{\otimes 3} & -\frac{1}{2} \id_{d}^{\otimes 3}
\end{pmatrix}
\begin{pmatrix}
(2,3) a_{11} & (2,3) a_{12} \\
(2,3) a_{21} & (2,3) a_{22}
\end{pmatrix}.
\end{align}
Eventually, the level $N=2$ reduced SDP is then shown as follows,
\begin{align}
\mathsf{SDP}_{k,2}^{\sym}(X)
&:=
\min_{\rho_{2,2} \geq 0} 
\tr (X_{2,2} \rho_{2,2} )
=
\min_{ \rho_{\begin{ytableau} {\scriptstyle \ } & {\scriptstyle \ } \\ {\scriptstyle \ } \end{ytableau}} \in \mathrm{Pos}(\mathbb{C}^{2} \otimes (\mathbb{C}^{d})^{\otimes 3}) }
\tr [ (X \otimes \id_{d}) a_{11} ]
, \\
\text{subject to } \:
&
\tr \rho_{\begin{ytableau} {\scriptstyle \ } & {\scriptstyle \ } \\ {\scriptstyle \ } \end{ytableau}}=1,
\ \text{and} \
\Delta_{\begin{ytableau} {\scriptstyle \ } & {\scriptstyle \ } \\ {\scriptstyle \ } \end{ytableau}}((2,3))
\rho_{\begin{ytableau} {\scriptstyle \ } & {\scriptstyle \ } \\ {\scriptstyle \ } \end{ytableau}}
=
\rho_{\begin{ytableau} {\scriptstyle \ } & {\scriptstyle \ } \\ {\scriptstyle \ } \end{ytableau}}
\ \text{where} \
\rho_{\begin{ytableau} {\scriptstyle \ } & {\scriptstyle \ } \\ {\scriptstyle \ } \end{ytableau}}
=
\begin{pmatrix}
a_{11} & a_{12} \\
a_{12}^{\dagger} & a_{22}
\end{pmatrix}.
\nonumber
\end{align}

\bigskip

The $N=3$ level is done as the same manner. In Schur basis $X_{2,3}=\Pi_{2} \otimes \id_{2}^{\otimes 2} \otimes X \otimes \id_{d}^{\otimes 2}$ is,
\begin{align}
X_{2,3}
=
\id_{1} \otimes
\ketbra{ \begin{ytableau} {\scriptstyle 1} & {\scriptstyle 3} \\ {\scriptstyle 2} & {\scriptstyle 4} \end{ytableau} }{ \begin{ytableau} {\scriptstyle 1} & {\scriptstyle 3} \\ {\scriptstyle 2} & {\scriptstyle 4} \end{ytableau} }
\otimes (X \otimes \id_{d}^{\otimes 2})
\oplus
\id_{3} \otimes
\ketbra{ \begin{ytableau} {\scriptstyle 1} & {\scriptstyle 3} & {\scriptstyle 4} \\ {\scriptstyle 2} \end{ytableau} }{ \begin{ytableau} {\scriptstyle 1} & {\scriptstyle 3} & {\scriptstyle 4} \\ {\scriptstyle 2} \end{ytableau} }
\otimes (X \otimes \id_{d}^{\otimes 2})
.
\label{eq:XtoBeTest,N=3,IsotropicState}
\end{align}
The $\ketbra{ \begin{ytableau} {\scriptstyle 1} & {\scriptstyle 3} \\ {\scriptstyle 2} & {\scriptstyle 4} \end{ytableau} }{ \begin{ytableau} {\scriptstyle 1} & {\scriptstyle 3} \\ {\scriptstyle 2} & {\scriptstyle 4} \end{ytableau} }=\mathbb{P}_{\begin{ytableau} {\scriptstyle \ } & {\scriptstyle \ } \\ {\scriptstyle \ } & {\scriptstyle \ } \end{ytableau} / (1^2)}$ gives the projector of skew representation $\begin{ytableau} {\scriptstyle \ } & {\scriptstyle \ } \\ {\scriptstyle \ } & {\scriptstyle \ } \end{ytableau} / (1^2) \equiv \begin{ytableau} {\scriptstyle \bullet} & {\scriptstyle \ } \\ {\scriptstyle \bullet} & {\scriptstyle \ } \end{ytableau}$. Likewise, the $\ketbra{ \begin{ytableau} {\scriptstyle 1} & {\scriptstyle 3} & {\scriptstyle 4} \\ {\scriptstyle 2} \end{ytableau} }{ \begin{ytableau} {\scriptstyle 1} & {\scriptstyle 3} & {\scriptstyle 4} \\ {\scriptstyle 2} \end{ytableau} }=\mathbb{P}_{\begin{ytableau} {\scriptstyle \ } & {\scriptstyle \ } & {\scriptstyle \ } \\ {\scriptstyle \ } \end{ytableau} / (1^2)}$ gives the projector of skew representation $\begin{ytableau} {\scriptstyle \ } & {\scriptstyle \ } & {\scriptstyle \ } \\ {\scriptstyle \ } \end{ytableau} / (1^2) \equiv \begin{ytableau} {\scriptstyle \bullet} & {\scriptstyle \ } & {\scriptstyle \ } \\ {\scriptstyle \bullet} \end{ytableau}$.
Other relevant skew shapes are $\begin{ytableau} {\scriptstyle \ } & {\scriptstyle \ } \\ {\scriptstyle \ } & {\scriptstyle \ } \end{ytableau} / (2) \equiv \begin{ytableau} {\scriptstyle \bullet} & {\scriptstyle \bullet} \\ {\scriptstyle \ } & {\scriptstyle \ } \end{ytableau}$ and $\begin{ytableau} {\scriptstyle \ } & {\scriptstyle \ } & {\scriptstyle \ } \\ {\scriptstyle \ } \end{ytableau} / (2) \equiv \begin{ytableau} {\scriptstyle \bullet} & {\scriptstyle \bullet} & {\scriptstyle \ } \\ {\scriptstyle \ } \end{ytableau}$.

For $\rho_{2,3}$ that is invariant under $U^{\otimes 4}$ on the auxiliary $k$-extension space, it can be shown that
\begin{align}
&
\rho_{2,3}
=
w_{\begin{ytableau} {\scriptstyle \ } & {\scriptstyle \ } \\ {\scriptstyle \ } & {\scriptstyle \ } \end{ytableau}} \left( \id_{1} \otimes
\rho_{\begin{ytableau} {\scriptstyle \ } & {\scriptstyle \ } \\ {\scriptstyle \ } & {\scriptstyle \ } \end{ytableau}}
\right)
\oplus
w_{\begin{ytableau} {\scriptstyle \ } & {\scriptstyle \ } & {\scriptstyle \ } \\ {\scriptstyle \ } \end{ytableau}}
\left( \frac{\id_{3}}{3} \otimes \rho_{\begin{ytableau} {\scriptstyle \ } & {\scriptstyle \ } & {\scriptstyle \ } \\ {\scriptstyle \ } \end{ytableau}}
\right)
\oplus
w_{\begin{ytableau} {\scriptstyle \ } & {\scriptstyle \ } & {\scriptstyle \ } & {\scriptstyle \ } \end{ytableau}}
\left( \frac{\id_{5}}{5} \otimes \rho_{\begin{ytableau} {\scriptstyle \ } & {\scriptstyle \ } & {\scriptstyle \ } & {\scriptstyle \ } \end{ytableau}} \right), \\
&
\text{where} \
\rho_{\begin{ytableau} {\scriptstyle \ } & {\scriptstyle \ } \\ {\scriptstyle \ } & {\scriptstyle \ } \end{ytableau}}
=
\begin{pmatrix}
a_{11} & a_{12} \\
a_{12}^{\dagger} & a_{22}
\end{pmatrix},
\quad
\rho_{\begin{ytableau} {\scriptstyle \ } & {\scriptstyle \ } & {\scriptstyle \ } \\ {\scriptstyle \ } \end{ytableau}}
=
\begin{pmatrix}
b_{11} & b_{12} & b_{13} \\
b_{12}^{\dagger} & b_{22} & b_{23} \\
b_{13}^{\dagger} & b_{22}^{\dagger} & b_{33}
\end{pmatrix},
\rho_{\begin{ytableau} {\scriptstyle \ } & {\scriptstyle \ } & {\scriptstyle \ } & {\scriptstyle \ } \end{ytableau}}
=
c.
\end{align}
The second line shows diagram-blocks, in which tableau-labeled matrices $a_{11} , \ldots, a_{22} , b_{11} , \ldots , b_{33} , c \in \mathbb{M}_{d^4 \times d^4}(\mathbb{C})$ are contained. The objective function $\tr (X_{2,3} \rho_{2,3})$ is then read as
\begin{align}
&
\tr (X_{2,3} \rho_{2,3})
=
w_{\begin{ytableau} {\scriptstyle \ } & {\scriptstyle \ } \\ {\scriptstyle \ } & {\scriptstyle \ } \end{ytableau}}
\tr \left[ (X \otimes \id_{d}^{\otimes 2}) a_{11} \right]
+
w_{\begin{ytableau} {\scriptstyle \ } & {\scriptstyle \ } & {\scriptstyle \ } \\ {\scriptstyle \ } \end{ytableau}}
\tr \left[ (X \otimes \id_{d}^{\otimes 2}) b_{11} \right].
\end{align}
Again, we can set $w_{\begin{ytableau} {\scriptstyle \ } & {\scriptstyle \ } & {\scriptstyle \ } & {\scriptstyle \ } \end{ytableau}}=0$ since Young diagram $\begin{ytableau} {\scriptstyle \ } & {\scriptstyle \ } & {\scriptstyle \ } & {\scriptstyle \ } \end{ytableau}$ makes no contribution to the objective function. Indeed, it is a Young diagram with rows less than $k$.
The sizes of diagram-blocks are $d_{\begin{ytableau} {\scriptstyle \ } & {\scriptstyle \ } \\ {\scriptstyle \ } & {\scriptstyle \ } \end{ytableau}}=d_{\begin{ytableau} {\scriptstyle \bullet} & {\scriptstyle \ } \\ {\scriptstyle \bullet} & {\scriptstyle \ } \end{ytableau}}+d_{\begin{ytableau} {\scriptstyle \bullet} & {\scriptstyle \bullet} \\ {\scriptstyle \ } & {\scriptstyle \ } \end{ytableau}}=2$, $d_{\begin{ytableau} {\scriptstyle \ } & {\scriptstyle \ } & {\scriptstyle \ } \\ {\scriptstyle \ } \end{ytableau}}=d_{\begin{ytableau} {\scriptstyle \bullet} & {\scriptstyle \ } & {\scriptstyle \ } \\ {\scriptstyle \bullet} \end{ytableau}} + d_{\begin{ytableau} {\scriptstyle \bullet} & {\scriptstyle \bullet} & {\scriptstyle \ } \\ {\scriptstyle \ } \end{ytableau}}=3$.

\medskip

Only $a_{11}$ and $b_{11}$ make contribution to the objective function, while $a_{12} , a_{22} , b_{12} , \ldots , b_{33}$ affect the optimal value by permutation constraint. Imposing the permutational symmetry is realized by representing Coxeter generators $\tau_{2} \equiv (2,3)$ and $\tau_{3} \equiv (3,4)$ into $\Delta_{\begin{ytableau} {\scriptstyle \ } & {\scriptstyle \ } \\ {\scriptstyle \ } & {\scriptstyle \ } \end{ytableau}}$ and $\Delta_{\begin{ytableau} {\scriptstyle \ } & {\scriptstyle \ } & {\scriptstyle \ } \\ {\scriptstyle \ } \end{ytableau}}$ as follows,
\begin{align}
&
\Delta_{\begin{ytableau} {\scriptstyle \ } & {\scriptstyle \ } \\ {\scriptstyle \ } & {\scriptstyle \ } \end{ytableau}}((2,3))
=
\begin{pmatrix}
\frac{1}{2} & \frac{\sqrt{3}}{2} \\
\frac{\sqrt{3}}{2} & -\frac{1}{2}
\end{pmatrix}
\otimes
(2,3), \quad
\Delta_{\begin{ytableau} {\scriptstyle \ } & {\scriptstyle \ } \\ {\scriptstyle \ } & {\scriptstyle \ } \end{ytableau}}((3,4))
=
\begin{pmatrix}
-1 & 0 \\
0 & 1
\end{pmatrix}
\otimes
(3,4),
\label{eq:ex-Delta2,2}
\\
&
\Delta_{\begin{ytableau} {\scriptstyle \ } & {\scriptstyle \ } & {\scriptstyle \ } \\ {\scriptstyle \ } \end{ytableau}} ((2,3))
=
\begin{pmatrix}
\frac{1}{2} & \frac{\sqrt{3}}{2} & 0 \\
\frac{\sqrt{3}}{2} & -\frac{1}{2} & 0 \\
0 & 0 & 1
\end{pmatrix}
\otimes (2,3), \quad
\Delta_{\begin{ytableau} {\scriptstyle \ } & {\scriptstyle \ } & {\scriptstyle \ } \\ {\scriptstyle \ } \end{ytableau}} ((3,4))
=
\begin{pmatrix}
1 & 0 & 0 \\
0 & \frac{1}{3} & \frac{2\sqrt{2}}{3} \\
0 & \frac{2\sqrt{2}}{3} & -\frac{1}{3}
\end{pmatrix}
\otimes (3,4),
\label{eq:ex-Delta3,1}
\end{align}
which give the the permutational constraints
\begin{align*}
\begin{pmatrix}
a_{11} & a_{12} \\
a_{12}^{\dagger} & a_{22}
\end{pmatrix}
&=
\begin{pmatrix}
\frac{1}{2} \id_{d}^{\otimes 3} & \frac{\sqrt{3}}{2} \id_{d}^{\otimes 3} \\
\frac{\sqrt{3}}{2} \id_{d}^{\otimes 3} & -\frac{1}{2} \id_{d}^{\otimes 3}
\end{pmatrix}
\begin{pmatrix}
(2,3) a_{11} & (2,3) a_{12} \\
(2,3) a_{12}^{\dagger} & (2,3) a_{22}
\end{pmatrix}
\\
&=
\begin{pmatrix}
-\id_{d}^{\otimes 3} & 0 \\
0 & \id_{d}^{\otimes 3}
\end{pmatrix}
\begin{pmatrix}
(3,4) a_{11} & (3,4) a_{12} \\
(3,4) a_{12}^{\dagger} & (3,4) a_{22}
\end{pmatrix},
\\
\begin{pmatrix}
b_{11} & b_{12} & b_{13} \\
b_{12}^{\dagger} & b_{22} & b_{23} \\
b_{13}^{\dagger} & b_{23}{\dagger}  & b_{33}
\end{pmatrix}
&=
\begin{pmatrix}
\frac{1}{2} \id_{d}^{\otimes 3} & \frac{\sqrt{3}}{2} \id_{d}^{\otimes 3} & 0 \\
\frac{\sqrt{3}}{2} \id_{d}^{\otimes 3} & -\frac{1}{2} \id_{d}^{\otimes 3} & 0 \\
0 & 0 & \id_{d}^{\otimes 3}
\end{pmatrix}
\begin{pmatrix}
(2,3) b_{11} & (2,3) b_{12} & (2,3) b_{13} \\
(2,3) b_{12}^{\dagger} & (2,3) b_{22} & (2,3) b_{23} \\
(2,3) b_{13}^{\dagger} & (2,3) b_{23}^{\dagger} & (2,3) b_{33} \\
\end{pmatrix}
\\
&=
\begin{pmatrix}
\id_{d}^{\otimes 3} & 0 & 0 \\
0 & \frac{1}{3} \id_{d}^{\otimes 3} & \frac{2\sqrt{2}}{2} \id_{d}^{\otimes 3} \\
0 & \frac{2\sqrt{2}}{3} \id_{d}^{\otimes 3} & -\frac{1}{3} \id_{d}^{\otimes 3}
\end{pmatrix}
\begin{pmatrix}
(3,4) b_{11} & (3,4) b_{12} & (3,4) b_{13} \\
(3,4) b_{12}^{\dagger} & (3,4) b_{22} & (3,4) b_{23} \\
(3,4) b_{13}^{\dagger} & (3,4) b_{23}^{\dagger} & (3,4) b_{33}
\end{pmatrix}.
\end{align*}
Eventually, the level $N=3$ reduced SDP is then shown as follows,
\begin{align}
\mathsf{SDP}_{k,3}^{\sym}(X)
&:=
\min_{ \{ \rho_{\lambda} \in \mathrm{Pos}(\mathbb{C}^{d_{\lambda}} \otimes (\mathbb{C}^{d})^{\otimes 4}) , \lambda=\begin{ytableau} {\scriptstyle \ } & {\scriptstyle \ } \\ {\scriptstyle \ } & {\scriptstyle \ } \end{ytableau} , \begin{ytableau} {\scriptstyle \ } & {\scriptstyle \ } & {\scriptstyle \ } \\ {\scriptstyle \ } \end{ytableau} \}}
\tr [ (\mathbb{P}_{\lambda / (1^2)} \otimes X \otimes \id_{d}) \rho_{\lambda} ]
, \\
\text{subject to } \:
&
\tr \rho_{\lambda}=1, \quad
\Delta_{\lambda}(\tau_{2})
\rho_{\lambda}
=
\rho_{\lambda}, \quad
\Delta_{\lambda}(\tau_{3})
\rho_{\lambda}
=
\rho_{\lambda}.
\nonumber
\end{align}
Here the nonnegative numbers $w_{\begin{ytableau} {\scriptstyle \ } & {\scriptstyle \ } \\ {\scriptstyle \ } & {\scriptstyle \ } \end{ytableau}}$ and $w_{\begin{ytableau} {\scriptstyle \ } & {\scriptstyle \ } & {\scriptstyle \ } \\ {\scriptstyle \ } \end{ytableau}}$ disappear, for the reason that the optimal value can be viewed as convex linear combination with respect to them.

\end{appendices}

\bibliographystyle{plain}
\bibliography{references}

\end{document}